\newcommand*{\naturals}{\ensuremath{\mathbb{N}}}
\newcommand*{\integers}{\ensuremath{\mathbb{Z}}}
\newcommand*{\reals}{\ensuremath{\mathbb{R}}}
\newcommand*{\torus}{\ensuremath{\mathbb{T}}}
\newcommand*{\rationals}{\ensuremath{\mathbb{Q}}}
\newcommand*{\dt}[1]{\ensuremath{\dot{#1}}}
\newcommand*{\abs}[1]{\ensuremath{\lvert #1 \rvert}}
\newcommand*{\sign}{\ensuremath{\operatorname{sgn}}}
\newcommand*{\Z}{\integers}
\newcommand*{\R}{\reals}
\newcommand*{\half}{\frac{1}{2}}
\newcommand*{\coupling}{\ensuremath{P_s}}
\newcommand*{\adaptation}{\ensuremath{\mathcal{A}}}
\newcommand*{\transienttime}{\ensuremath{T_\text{transient}}}
\newcommand*{\stabletime}{\ensuremath{T_\text{stable}}}
\newcommand*{\fixedpointtolerance}{\ensuremath{\delta_\text{f.p.}}}
\newcommand*{\quiescent}{Q\xspace}
\newcommand*{\quiescentlibration}{Q\textsubscript{\( \ell \)}\xspace}
\newcommand*{\spiking}{S\xspace}
\newcommand*{\SNone}{SN\textsubscript{1}\xspace}
\newcommand*{\SNonea}{SN\textsubscript{1a}\xspace}
\newcommand*{\SNoneb}{SN\textsubscript{1b}\xspace}
\newcommand*{\SNtwo}{SN\textsubscript{2}\xspace}
\newcommand*{\LPCone}{LPC\textsubscript{1}\xspace}
\newcommand*{\LPCtwo}{LPC\textsubscript{2}\xspace}
\newtheorem{lemma}{Lemma}[section]
\newcounter{panel}
\newcommand*{\panellabel}[1]{\refstepcounter{panel}\alph{panel}\label{#1}}
\crefname{panel}{panel}{panels}
\begin{document}

\preprint{AIP/123-QED}

\title[Co-evolutionary dynamics for two adaptively coupled Theta neurons]{Co-evolutionary dynamics for two adaptively coupled Theta neurons}

\author{Felix Augustsson}

\author{Erik A. Martens}
\email{erik.martens@math.lth.se}

\date{\today}
\affiliation{%
${}^a$Centre for Mathematical Sciences, Lund University, S\"olvegatan 18, 221 00 Lund, Sweden\\ 
}%


\begin{abstract}
Natural and technological networks exhibit dynamics that can lead to complex cooperative behaviors, such as synchronization in coupled oscillators and rhythmic activity in neuronal networks. Understanding these collective dynamics is crucial for deciphering a range of phenomena from brain activity to power grid stability. 
Recent interest in co-evolutionary networks has highlighted the intricate interplay between dynamics on and of the network with mixed time scales. 
Here, we explore the collective behavior of excitable oscillators in a simple networks of two Theta neurons with adaptive coupling without self-interaction.
Through a combination of bifurcation analysis and numerical simulations, we seek to understand how the level of adaptivity in the coupling strength, $a$, influences the dynamics.
We first investigate the dynamics possible in the non-adaptive limit; our bifurcation analysis reveals stability regions of quiescence and spiking behaviors, where the spiking frequencies mode-lock in a variety of configurations. Second, as we increase the adaptivity $a$, we observe a widening of the associated Arnol'd tongues, which may overlap and give room for multi-stable configurations. For larger adaptivity, the mode-locked regions may further undergo a period-doubling cascade into chaos. 
Our findings contribute to the mathematical theory of adaptive networks and offer insights into the potential mechanisms underlying neuronal communication and synchronization.
\end{abstract}

\keywords{adaptive networks, collective dynamics, learning, Theta neurons}
\maketitle

\begin{quotation}
Networks appear in a variety of natural and technological systems connecting dynamic units, such as heart and nerve cells in the body, and generators and consumers in the power grid. The network coupling facilitates emergent collective behaviors, such as the synchronization of oscillatory units. Examples include neurons adhering to collective firing during epileptic seizure, and the frequency locking of alternating currents in the power grid. Maintaining (or losing) such collective behaviors is critical for the proper functioning of these systems.
Much research effort has already been laid in studying systems with either dynamics \emph{on} the network nodes, or \emph{of} the network edges. Combining dynamics \emph{on} and \emph{of} the networks results in co-evolutionary or adaptive networks, leading to new exciting challenges in understanding and analyzing the resulting dynamics. 
To make progress, this relatively young research area in mathematics has focused on relatively simple phase oscillator models. 
Here, we aim to extend the understanding of such networks by considering \emph{excitable} oscillators, namely Theta neurons with adaptive coupling. We find several dynamic phenomena intrinsic to adaptive coupling, including not only the widening of mode locking regions, where neurons spike with different frequency ratios; but also their overlapping in multi-stable regions, and a period-doubling cascade into chaos. Thus, the system offers a wide range of stable mode-locking configurations for neural spiking behavior.
\end{quotation}


\section{Introduction}

Networks are ubiquitous in nature and technology, including a range of systems from small to large scales~\cite{strogatz2001exploring}, such as metabolic cell circuits~\cite{BarabasiOltvai2004}, vascular and leaf venation networks~\cite{blinder2013cortical,kirst2020mapping,roth2001evolution,ronellenfitsch2015topological}, neural networks in the brain~\cite{sporns2013structure,lynn2019physics}, power grid networks~\cite{pagani2014power,rohden2012self}, the internet~\cite{tu2000robust}, and transport systems~\cite{Kaluza2010,xie2007measuring}. While for some purposes, networks may be considered stationary, the inclusion of dynamic processes are of great significance in a variety of contexts. Dynamics may occur on the network nodes (dynamics \emph{on} the network) or along the edges forming the network (dynamics \emph{of} the network). Examples of dynamics \emph{on} the nodes include neurons, heart cells, flashing fireflies, Josephson junctions, metronomes and pendulum clocks~\cite{strogatz2004sync,PikvskyRosenblum2003synchronization}; dynamics \emph{of} the network includes processes in social networks, synaptic plasticity in the brain, regulation of flow in vascular networks, and more~\cite{Kempter1999,Jacobsen2008,MartensKlemm2017,mestre2020cerebrospinal}.
Dynamics on networks are known to lead to emergent cooperative or collective behaviors, such as swarming and flocking in fish and birds~\cite{ginelli2010relevance,couzin2009collective}, or the synchronization of frequencies or phases in networks of coupled oscillators~\cite{strogatz2004sync,rosenblum2003synchronization,PikvskyRosenblum2003synchronization} and collective firing of neurons~\cite{luke2013complete,montbrio2015macroscopic}, which have been subject to a long series of studies~\cite{strogatz2000kuramoto,acebron2005kuramoto,bick2020understanding,martens2024integrability}. 

While these types of dynamics have been studied for a few decades, the combination of both in so-called \emph{co-evolutionary} or \emph{adaptive networks} has attracted increased interest more recently~\cite{gross2008adaptive,berner2023adaptive}. The complex interactions that may occur between two high dimensional systems which could be coupled in a variety of ways and at mixed time scales~\cite{kuehn2015multiple} represents a challenge in terms of developing a mathematical theory. It thus makes sense to start studying this topic using simple models. Some effort has already been undertaken in attempting to understand the dynamics of simple phase oscillator models, such as Kuramoto model~\cite{kuramoto1984chemical} where oscillators are rotating constantly and are coupled with first order harmonics, combined with coupling strengths that adapt in response to the phase difference between two individual nodes~\cite{seliger2002plasticity,berner2023adaptive,Berner2019,berner2021desynchronization,juettner2023adaptive,CestnikMartensAdaptive2024}.

In this study, we ask the question how \emph{excitable oscillators}, such as neurons, behave collectively in an adaptive network. While such dynamics has been investigated computationally in a variety of settings~\cite{YamakouJinjieMartens2024}, this topic is subject for developing further rigid mathematical analysis.
To take first steps in the direction of answering this question, we choose a simple phase oscillator model, the Theta neuron model~\cite{ermentrout1986parabolic,gutkin2022theta,ermentrout2008ermentrout}, combined with a simple adaptation rule for the coupling strength. The Theta neuron is not only a good candidate model because of its simplicity, but is also equivalent to the quadratic integrate-and-fire model, can be derived from higher dimensional neuronal models~\cite{ermentrout1986parabolic,pietras2019network}, featuring exact dimensional reduction methods exist for populations of Theta neurons with uniform coupling, akin to neural masses,~\cite{luke2013complete,bick2020understanding,coombes2018next,martens2024integrability,pietras2023exact}, and has been used to study a variety of applications in a neuroscientific contexts~\cite{bick2020understanding,schmidt2018network,coombes2018next}.

To simplify the analysis, we follow a similar paradigm as in previous work by one of the authors~\cite{juettner2023adaptive}, and study the dynamics for two adaptively coupled Theta neurons; we parameterize the strength of adaptation with the parameter $a$, starting from the non-adaptive limit of coupled neurons, and ask the questions: what dynamics occur for increasing adaptivity, which can genuinely be attributed to adaptive dynamics? Does adaptivity strengthen or weaken the emergent behaviors such as collective spiking? 
To obtain some answers we set up the model in Sec.~II and begin the study in Sec.~III  with a careful bifurcation analysis for the dynamics of non-adaptively coupled neurons, for non-identical (Sec.~III.C) and identical excitability parameters (Sec.~III.D). Understanding these limiting cases, we are ready to begin a numerical analysis for the system with adaptive coupling. We conclude the study with a discussion of the obtained results in Sec.~IV.

%

\section{Model}
\subsection{Network of Theta neurons}
We consider a model of $N$ interacting Theta neurons, where the phase \( \theta_k(t) \in \torus = S^1 \simeq \R/2\pi \Z \) of the $k$th neuron, 
\(k\in[N]:=\{1,\ldots,N\}\), evolves according to 
\begin{equation}\label{eq:theta_neuron}
  \dt{\theta}_k = 1 - \cos{\theta_k} + \left( 1 + \cos{\theta_k} \right) \iota_k,
\end{equation}
with parameter \(\iota_k = \eta_k+I_k\) where \(\eta_k\) is the excitability threshold and \(I_k\) the synaptic input current.
The Theta neuron~\eqref{eq:theta_neuron} is the normal form of the saddle-node-on-invariant-circle (SNIC) bifurcation~\cite{ermentrout2008ermentrout} and is a canonical type 1 neuron~\cite{ermentrout1986parabolic}.
For \(\iota_k<0\), a stable and unstable fixed point (\(\theta_-,\theta_+\)) occur on the phase circle \(\torus\); for \(\iota_k = 0\), these fixed points coalesce in a saddle-node bifurcation; for \(\iota_k>0\), the flow on the circle results in a periodic motion (spiking).

If \(\iota_k<0\), the Theta neuron is said to be \emph{excitable} or \emph{quiescent}: in the absence of perturbations, the phase relaxes to the stable fixed point \( \theta_- \) on the phase circle \(\torus\); however, a perturbation may lead to a single spike (at \(\theta=\pi\)) before returning to the stable fixed point.
This may happen in at least two ways: (i) the phase is perturbed across the unstable fixed point (constituting a threshold), this is possible, considering that the Theta model derives from a higher dimensional model~\cite{ermentrout1986parabolic} so that the circle is embedded in a higher dimensional space; (ii) a very short-lived (time scale of a single cycle) increase in \( \iota_k \) momentarily pushes the system across the bifurcation threshold  \( \iota_k = 0 \).
If \(\iota_k > 0\), the neuron is excited and is said to be \emph{spiking} or \emph{firing} (periodically).

In a network of Theta neurons, the input current may result from a variety of interactions~\cite{bick2020understanding}.
Here, we assume that the input current for neuron \( k \) is given by a sum of pulses emitted from adjacent neurons \(l\),
\begin{equation}
  I_k = \frac{1}{d(k)} \sum_{l \in [N]} \kappa_{kl} \coupling(\theta_l),
\end{equation}
with pulses~\cite{ariaratnam2001phase,bick2020understanding} given by
\begin{align}
  \coupling(\theta) = a_s \left( 1 - \cos{\theta} \right)^s,
\end{align}
where the shape parameter \(s\in N\) controls the pulse width, and the normalization constant \( a_s=2^s(s!)^2/(2s)! \) satisfies \( \int_0^{2\pi}P_s(\theta) \, \dl \theta = 2\pi \).
The pulse function \( \coupling \) will attain its maxima at \( \theta = \pi \), which is referred to as a \emph{spike}.
Pulses received from neuron \( l \) are weighted by an synaptic interaction or coupling strength \( \kappa_{kl} \), normalized by the number  \(d(k)\) of neurons incident to neuron \(k\).

In this study, we assume that neurons be \emph{autaptic}, i.e., neurons have no self-interactions with \(\kappa_{ll}=0 \) for \( l\in[N]\), and  \(d(k)=n-1\).

\subsection{Co-evolutionary network model}
Many models assume that the coupling strengths, \( \kappa_{kl} \), be constant, in which case we say the coupling is \emph{non-adaptive}.
In this study, however, our aim is to study the effects of adaptive coupling strengths, i.e., coupling strengths that adapt according to the activity of neurons.
We consider the case where the coupling adapts on a time scale \( \varepsilon^{-1} \)  according to the rule
\begin{equation}\label{eq:adaptationrule1}
  \varepsilon^{-1}\dt{\kappa}_{kl} = \adaptation(\theta_k, \theta_l) - \kappa_{kl},
\end{equation}
with an \emph{adaptation function} \( \adaptation = \adaptation(\theta_k,\theta_l)\).
Thus, the adaptation is implemented to be pairwise or local between two neurons \( k \) and \( l \), and is homogeneous across the network.
Combining Eqs.~\eqref{eq:theta_neuron} and \eqref{eq:adaptationrule1} results in adaptive or \emph{co-evolutionary} network dynamics~\cite{gross2008adaptive,berner2023adaptive}.
The second term in \eqref{eq:adaptationrule1} guarantees that the coupling strength stays bounded as long as \( A \) is bounded.
In this study, we choose the adaptive function
\begin{equation}
  \adaptation(\theta_k, \theta_l) = b + a \cos(\theta_k - \theta_l + \beta),
\end{equation}
where \( b \) implements a \emph{baseline} for the coupling strength, \( a \) is the \emph{adaptivity} or \emph{adaptation strength}, and \( \beta \) an \emph{adaptation shift}.

Non-adaptive dynamics is retrieved when \( a = 0 \).
In this case the coupling converges to the baseline value \( \kappa_{kl} \to b \) as \(\ t\to \infty\).
Varying the adaptation shift tunes the type of adaptation.
For \(\beta=0,\pi\), the adaptation function is symmetric, \(\adaptation(\theta_k,\theta_l)=\adaptation(\theta_l,\theta_k)\).
For \( \beta = 0 \) and \(a > 0\), the coupling for in-phase configurations, i.e., neurons with phase difference close to 0 (or \( \pi \)), is amplified (or suppressed).
For \(\beta=\pi\) and \(a > 0\) (which is equivalent to \( \beta = 0 \) and \(a < 0\)), the coupling for anti-phase configurations, i.e., neurons with phase difference close to \(\pi\) (or 0), is amplified (or suppressed).
Other choices for $\beta$ may break the symmetry of the system, resulting in more complicated dynamics~\cite{juettner2023adaptive}.

\subsection{Governing equations for \texorpdfstring{\( N=2 \)}{N = 2} neurons with adaptation}
For the sake of this study, to simplify the analysis we restrict our focus to \( N=2 \) neurons with pulse shape parameter \(s = 1\).
Further, we choose \( \beta = 0 \) (which also covers \( \beta = \pi \) if the sign of \( a \) is flipped), which implies that the difference coupling \( \Delta := \kappa_{21} - \kappa_{12} \to 0\) as $t\to\infty$, since we then have \(  \dt{\Delta} = - \Delta \).
Thus \( \kappa := (\kappa_{12} + \kappa_{21}) / 2 \) is the only dynamic inter-neuron coupling strength in the system (if transients are ignored).
Using the assumption of autaptic neurons, we then have \( I_k = \kappa(1 - \cos(\theta_k)) \).
The dynamics then obeys the \emph{governing equations}
\begin{subequations} \label{eq:reduced-model}
  \begin{align}
    \dt{\theta_1} &= 1 - \cos\theta_1 + \left( 1 + \cos\theta_1 \right) \left( \eta_1 + \kappa \left( 1 - \cos\theta_2 \right) \right), \label{eq:reduced-model-1} \\
    \dt{\theta_2} &= 1 - \cos\theta_2 + \left( 1 + \cos\theta_2 \right) \left( \eta_2 + \kappa \left( 1 - \cos\theta_1 \right) \right), \label{eq:reduced-model-2} \\
    \dt{\kappa} &= \varepsilon \left(b + a \cdot \cos(\theta_2 - \theta_1) - \kappa \right). \label{eq:reduced-model-3}
  \end{align}
\end{subequations}
We note that one expects that trajectories in the asymptotic time limit of  \eqref{eq:reduced-model} reside in the compact space given by
\begin{equation}
  \left( \theta_1, \theta_2, \kappa \right) \in \torus^2 \times \left[ b - \abs{a}, b + \abs{a} \right],
\end{equation}
since the range of \( cos \) lies in the interval \( [-1, 1] \).

In what follows, we assume that the adaptation evolves with \( \varepsilon = 0.01 \) and has a zero base line (\( b = 0 \)).


\section{Analysis for non-adaptive coupling \texorpdfstring{\( (a = 0) \)}{(a = 0)}}
\subsection{Fixed point analysis and saddle-node bifurcations} \label{sec:nonadaptive-fp-analysis}
The fixed point conditions for \eqref{eq:reduced-model} are
\begin{subequations} \label{eq:non-adaptive-fp}
  \begin{align}
    0 &= 1 - \cos\theta_1 + \left( 1 + \cos\theta_1 \right) \left( \eta_1 + \kappa \left( 1 - \cos\theta_2 \right) \right), \label{eq:non-adaptive-fp-1} \\
    0 &= 1 - \cos\theta_2 + \left( 1 + \cos\theta_2 \right) \left( \eta_2 + \kappa \left( 1 - \cos\theta_1 \right) \right). \label{eq:non-adaptive-fp-2}
  \end{align}
\end{subequations}
By the change of variables, \( \cos \theta_1 = 2x - 1 \), \( \cos \theta_2 = 2y - 1 \) and \( \nu_k = \eta_k + 2 \kappa - 1 \) (where \( 0\leq x\leq 1\) and \( 0\leq y\leq 1 \)), the fixed point equations in \cref{eq:non-adaptive-fp} can be expressed as
\begin{subequations} \label{eq:non-adaptive-fp-simplified}
  \begin{align}
  \label{eq:non-adaptive-fp-simplified-a}
    0 &= -2 \kappa x y + \nu_1 x + 1, \\ 
  \label{eq:non-adaptive-fp-simplified-b}
    0 &= -2 \kappa x y + \nu_2 y + 1.
  \end{align}
\end{subequations}
However, from \cref{eq:non-adaptive-fp-simplified} we see that solutions with \( x = 0 \) or \( y = 0 \) are not valid solutions, so for any fixed point \( x,y \in (0, 1] \).

We wish to eliminate one variable.
Subtracting \eqref{eq:non-adaptive-fp-simplified-a} from \eqref{eq:non-adaptive-fp-simplified-b} implies that
\begin{equation} \label{eq:non-adaptive-fp-simplified-condition}
  \nu_1 x = \nu_2 y.
\end{equation}
Thus, solutions can only exist if \( \sign \nu_1 = \sign \nu_2 \).
This condition can be satisfied in two ways:

(i) If \( \nu_1 = \nu_2 = 0 \), Eqs.~\eqref{eq:non-adaptive-fp-simplified} reduce to the condition 
\begin{equation}
  2 \kappa xy = 1.
\end{equation}
This condition defines a one-dimensional manifold of fixed points as long as \( \kappa > \half \), and represents the (zero-dimensional) point \( x = y = 1 \) when \( \kappa = \half \).
On the other hand, if $\kappa<\half$ there is no fixed point.
Note that $\nu_1=\nu_2 =0 $ implies \( \eta_1 = \eta_2 = 1-2\kappa\),  i.e., neurons are identical.
This curve is seen in Fig.~\ref{fig:nonadaptive-sym-attractors} (dashed line).

(ii) If \( \nu_1,\nu_2 \neq 0 \), the condition  \( \sign \nu_1 = \sign \nu_2 \) means that either
\( \nu_1, \nu_2 > 0 \) or \( \nu_1, \nu_2 < 0 \).
We may without loss of generality assume that \( \abs{\nu_1} \leq \abs{\nu_2} \).
From \eqref{eq:non-adaptive-fp-simplified-condition} it follows that
\begin{equation}
  y = \frac{\nu_1}{\nu_2} x,
\end{equation}
implying that \( y \in (0, 1] \) if \( x \in (0, 1] \).
Considering Eqs.~\eqref{eq:non-adaptive-fp-simplified}, it is therefore sufficient in the case of non-zero, same sign \( \nu_1, \nu_2 \) to solve the quadratic equation,
\begin{equation}\label{eq:fixedpointquadratic}
  0 = -2 \kappa \frac{\nu_1}{\nu_2} x^2 + \nu_1 x + 1,
\end{equation}
for \( x \in (0, 1] \).
Saddle-node bifurcations occur in two scenarios.
The first is when \( x = \half(1 + \cos{\theta_1}) = 1 \) (or \( y = 1 \) in the case of \( \abs{\nu_2} \leq \abs{\nu_1} \)).
In the original parameters, this condition is equivalent to
\begin{subequations}\label{eq:SNone}
  \begin{align}
    0 &= \eta_1 \eta_2 + 2 \eta_2 \kappa - \eta_1, \text{ for } \eta_2 \leq 0 \\
    0 &= \eta_1 \eta_2 + 2 \eta_1 \kappa - \eta_2, \text{ for } \eta_1 \leq 0
  \end{align}
\end{subequations}
The second is when the discriminant of the quadratic polynomial in Eq.~\eqref{eq:fixedpointquadratic} is 0 while the solution to the equation \( x \in (0, 1] \) (or \( y \in (0, 1] \) in the case of \( \abs{\nu_2} \leq \abs{\nu_1} \)), i.e.
\begin{equation}\label{eq:SNtwo}
  \begin{split}
    (\eta_1 + 2 \kappa - 1) (\eta_2 + 2 \kappa - 1) + 8 \kappa = 0&, \\
    \text{ for } \quad 1 - 2 \abs{\kappa} \leq \eta_1 \leq 1 + 2 \abs{\kappa}& \\
    \text{ and } \quad 1 - 2 \abs{\kappa} \leq  \eta_2 \leq 1 + 2 \abs{\kappa}&.
  \end{split}
\end{equation}
The resulting saddle-node bifurcation curves \SNone (Eq.~\eqref{eq:SNone}) and \SNtwo (Eq.~\eqref{eq:SNtwo}) are shown as black solid curves in Figs.~\ref{fig:nonadaptive-sym-attractors} and \ref{fig:nonadaptive-asym-attractors}.

It is furthermore possible to determine the number of fixed points that appear in the different regions outlined by the fold bifurcations.
A derivation is given in Appendix~\ref{app:polynomial-in-interval}, and the results are summarized in Table~\ref{tab:nonadaptive-sym-fixed-points}~and~\ref{tab:nonadaptive-asym-fixed-points}.

In the following two sections, we specialize these results to discuss the cases of identical and non-identical excitabilities separately, see Fig.~\ref{fig:nonadaptive-sym-attractors} and \ref{fig:nonadaptive-asym-attractors}.

\subsection{Dynamics for identical neurons\texorpdfstring{ (\( \eta_1 = \eta_2 \))}{}}

We consider first the situation when the two neurons are identical, i.e., \( \eta:=\eta_1 = \eta_2 \), and uncoupled, so that \( \kappa = b = 0 \) in the governing equations.
We then have \( \iota_k = \eta \) for \( k = 1,2 \), and each neuron effectively behaves like a single neuron.
Recall that we already explained in Sec.~\ref{sec:nonadaptive-fp-analysis} following~\eqref{eq:non-adaptive-fp}, such a neuron exhibits a SNIC bifurcation at the threshold defined by \( \iota_k = \eta = 0 \).
An uncoupled neuron may thus only display two different types of dynamic states.
For \( \eta \leq 0 \) the phase \( \theta_k \) ends up in a fixed point (quiescence) which we denote \quiescent.
For \( \eta > 0 \), the phase will periodically pass \( \theta = -\pi \) (spiking) which we denote \spiking.
Since both neurons are identically parametrized, two uncoupled neurons are either both quiescent (\quiescent\quiescent) or both spiking (\spiking\spiking), as shown in Fig.~\ref{fig:nonadaptive-sym-attractors}.
\begin{figure}[htp!]
  \centering
  \begin{overpic}[width=\columnwidth]{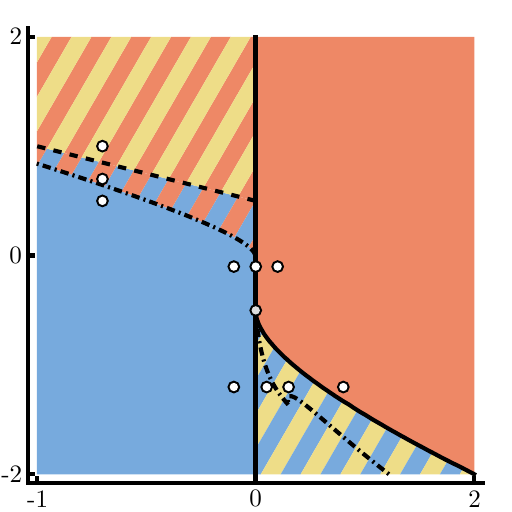}
    \put(96,4){\Large\( \eta_\sigma \)}
    \put(4,96){\Large\( \kappa \)}
    \put(51.5,70){\SNone}
    \put(80,15){\SNtwo}
    \put(51.5,37){Cusp}
    \put(46.5,44.5){\panellabel{sym-kappa-close0-I}}
    \put(51,44.5){\panellabel{sym-kappa-close0-II}}
    \put(55.5,44.5){\panellabel{sym-kappa-close0-III}}
    \put(46,20.5){\panellabel{sym-kappa-neg-I}}
    \put(51.5,20.5){\panellabel{sym-kappa-neg-II}}
    \put(58.5,23){\panellabel{sym-kappa-neg-III}}
    \put(69,23){\panellabel{sym-kappa-neg-IV}}
    \put(21.8,59.5){\panellabel{sym-kappa-grow-I}}
    \put(21.8,64){\panellabel{sym-kappa-grow-II}}
    \put(21.8,70.5){\panellabel{sym-kappa-grow-III}}
    \put(22,35){\Large\quiescent\quiescent}
    \put(70,65){\Large\spiking\spiking}
    \put(22,80){\large\quiescentlibration\quiescentlibration/\spiking\spiking}
    \put(33,60.5){\rotatebox{-15}{\quiescent\quiescent/\spiking\spiking}}
    \put(52,10){\quiescent\quiescent/\quiescentlibration\quiescentlibration}
  \end{overpic}
  \caption{
    Stability diagram for identical neurons in the non-adaptive system \cref{eq:reduced-model} with \(a=0\) and \( \eta_1 = \eta_2 = \eta \).
    Saddle-node bifurcations are shown as solid lines, the continuum of fixed points as a dashed line and global bifurcations as dash-dotted lines.
    Regions are colored based on their attractors: \quiescent\quiescent is blue, \quiescentlibration\quiescentlibration is yellow and \spiking\spiking is orange.
    Phase portraits for the points \ref{sym-kappa-close0-I}-\ref{sym-kappa-grow-III} are shown in \cref{fig:nonadaptive-sym-phases-kappa-close0,fig:nonadaptive-sym-phases-kappa-neg,fig:nonadaptive-sym-phases-kappa-grow}
  }
  \label{fig:nonadaptive-sym-attractors}
\end{figure}

When the two neurons are coupled, \( \kappa \neq 0 \), the resulting dynamics and possible bifurcations can become more complicated.
In general, three types of attractors can be observed for identical neurons.
Both states/attractors from the uncoupled  case carry over to the coupled case, i.e., both nodes are quiescent (\quiescent\quiescent) or spiking periodically (\spiking\spiking); however, a new dynamic scenario becomes possible, where the neurons do not spike but oscillate in a small-amplitude periodic orbit (\quiescentlibration\quiescentlibration), corresponding to a libration (spiking corresponds to a rotation). 

We first focus on negative coupling strengths.
The same bifurcation scenario as for the uncoupled case ($\kappa=0$) occurs for the range of $-\half<\kappa \leq 0$:
\begin{figure}[htp!]
  \centering
  \begin{overpic}[width=.3\columnwidth]{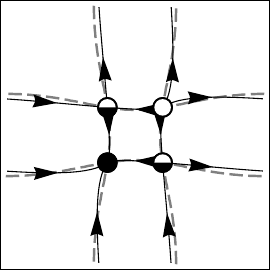}
    \put(5,86){\ref{sym-kappa-close0-I})}
  \end{overpic}
  \hfill
  \begin{overpic}[width=.3\columnwidth]{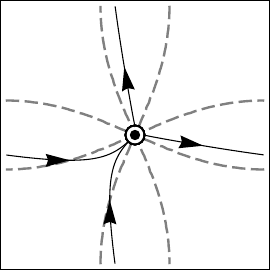}
    \put(5,86){\ref{sym-kappa-close0-II})}
  \end{overpic}
  \hfill
  \begin{overpic}[width=.3\columnwidth]{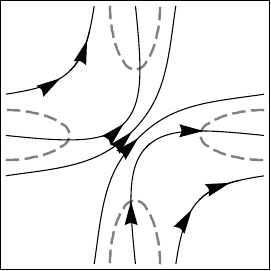}
    \put(5,86){\ref{sym-kappa-close0-III})}
  \end{overpic}
  
  \caption{
    Phase portrait for \(a=0\), \(\kappa=-0.1\)  with \( \eta = -0.1, 0, 0.1 \) and identical excitabilities, corresponding to the points \ref{sym-kappa-close0-I}-\ref{sym-kappa-close0-III} shown in \cref{fig:nonadaptive-sym-attractors}, respectively.
    (Nullclines are shown as gray dashed lines, stable/unstable/saddle fixed points are shown as filled/empty/half-filled circles.)
  }
  \label{fig:nonadaptive-sym-phases-kappa-close0}
\end{figure}
When \( \eta < 0 \) there is a stable node, unstable node and two saddles (\quiescent\quiescent, \cref{fig:nonadaptive-sym-phases-kappa-close0} \cref{sym-kappa-close0-I}).
At \( \eta = 0 \) the 4 fixed points coalesce in the origin in a saddle-node bifurcation (\SNone, \cref{fig:nonadaptive-sym-phases-kappa-close0} \cref{sym-kappa-close0-II}).
For \( \eta > 0 \), there are only spiking orbits (\spiking\spiking, \cref{fig:nonadaptive-sym-phases-kappa-close0} \cref{sym-kappa-close0-III}).
Note that these spiking states are not limit cycles, but rather are foliating the entire phase space.

As we decrease $\kappa$ further below $\half$, additional bifurcations occur.
To explain the transitions for \( \kappa < \half \) we increase \( \eta \) from negative to positive values.
\begin{figure}[htp!]
  \centering
  \begin{overpic}[width=.3\columnwidth]{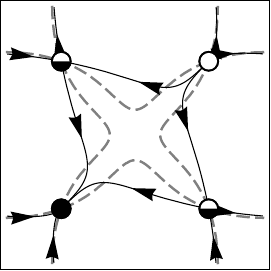}
    \put(5,86){\ref{sym-kappa-neg-I})}
  \end{overpic}
  \hspace{.05\columnwidth}
  \begin{overpic}[width=.3\columnwidth]{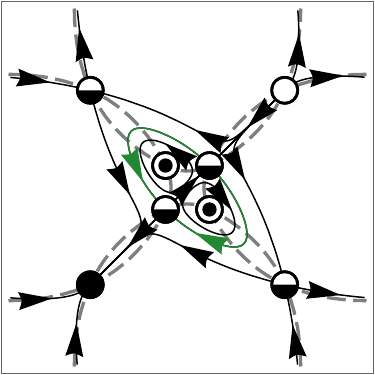}
    \put(5,86){\ref{sym-kappa-neg-II})}
  \end{overpic}
  \\[.05\columnwidth]
  \begin{overpic}[width=.3\columnwidth]{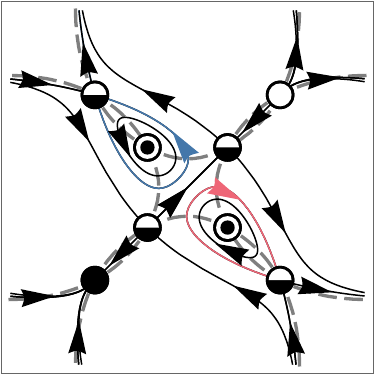}
    \put(5,86){\ref{sym-kappa-neg-III})}
  \end{overpic}
  \hspace{.05\columnwidth}
  \begin{overpic}[width=.3\columnwidth]{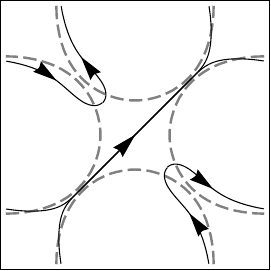}
    \put(5,86){\ref{sym-kappa-neg-IV})}
  \end{overpic}
  
  \caption{
    Phase portraits for the non-adaptive system with \(a=0\), \(\kappa=-1.2\) and \( \eta = -0.1, 0.05, 0.15, 0.4\) and identical excitabilities, corresponding to the points \ref{sym-kappa-neg-I}-\ref{sym-kappa-neg-IV} in \cref{fig:nonadaptive-sym-attractors}, respectively.
    (Nullclines are shown as gray dashed lines, stable/unstable/saddle/center fixed points are shown as filled/empty/half-filled/center-dot circles.)
  }
  \label{fig:nonadaptive-sym-phases-kappa-neg}
\end{figure}
For $\eta<0$ we have two saddles and a stable and an unstable node (\quiescent\quiescent, \cref{fig:nonadaptive-sym-phases-kappa-neg} \cref{sym-kappa-neg-I}).
As we augment $\eta$ and traverse \SNone from left to right, four additional fixed points, two saddles and two center points, are born at the origin (\cref{fig:nonadaptive-sym-phases-kappa-neg} \cref{sym-kappa-neg-II}).
We denote these as the inner fixed points, as opposed to the outer fixed points present in~\cref{sym-kappa-neg-I}.
The center points are surrounded by a foliation of librations (\quiescentlibration\quiescentlibration) bounded by heteroclinic curves connecting the two saddles (green orbits in \cref{fig:nonadaptive-sym-phases-kappa-neg} \cref{sym-kappa-neg-II}). 
Thus, this region displays bi-stability between quiescent (\quiescent\quiescent) and librating (\quiescentlibration\quiescentlibration) states.
As we increase $\eta$ further, the heteroclinic connections change their nature as the dash-dotted curve is traversed\footnote{This boundary is numerically determined by considering orbits starting near the outer saddle and determining whether the resulting orbit remains oscillatory (libration) or converges to the stable node}:
instead of connecting the saddles to  each other, they now have become homoclinic curves connecting each saddle to itself (blue and red orbits in \cref{fig:nonadaptive-sym-phases-kappa-neg} \cref{sym-kappa-neg-III}).
Note that the system still provides the same type of bi-stability (\quiescent\quiescent/\quiescentlibration\quiescentlibration).
Finally, as we further increase $\eta$, we traverse \SNtwo and all eight fixed points coalesce pairwise on the diagonals $|\theta_1|=|\theta_2|$ in a saddle-node bifurcation, leaving behind only periodic spiking orbits (\spiking\spiking, \cref{fig:nonadaptive-sym-phases-kappa-neg} \cref{sym-kappa-neg-IV}).

For positive coupling (\( \kappa > 0 \)), we first focus on the bifurcations in the region of negative excitability (\( \eta < 0 \)).
\begin{figure}[htp!]
  \centering
  \begin{overpic}[width=.3\columnwidth]{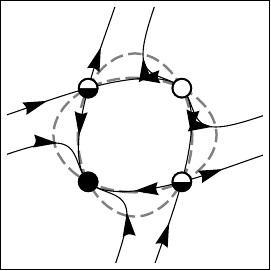}
    \put(5,86){\ref{sym-kappa-grow-I})}
  \end{overpic}
  \hfill
  \begin{overpic}[width=.3\columnwidth]{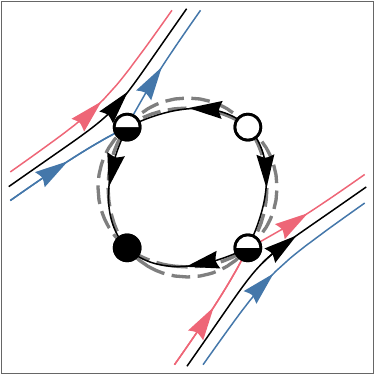}
    \put(5,86){\ref{sym-kappa-grow-II})}
  \end{overpic}
  \hfill
  \begin{overpic}[width=.3\columnwidth]{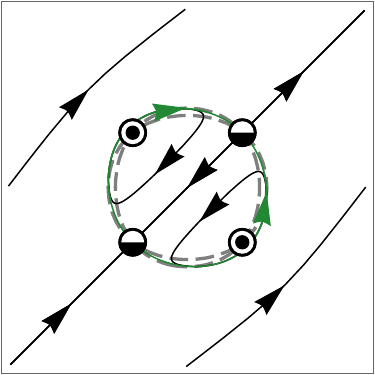}
    \put(5,86){\ref{sym-kappa-grow-III})}
  \end{overpic}
  
  \caption{
    Phase portraits for the non-adaptive system with \(a=0\), \( \eta = -0.7\) and \(\kappa = 0.5, 0.7, 1\) and identical excitabilities, corresponding to the parameter points \ref{sym-kappa-grow-I}-\ref{sym-kappa-grow-III} in \cref{fig:nonadaptive-sym-attractors}, respectively.
    Two homoclinic orbits are marked in blue and red, and two heteroclinic orbits are marked in green.
    (Nullclines are shown as gray dashed lines, stable/unstable/saddle/center fixed points are shown as filled/empty/half-filled/center-dot circles.)
    }
  \label{fig:nonadaptive-sym-phases-kappa-grow}
\end{figure}
For sufficiently low \( \kappa \), the system only displays four fixed points (stable/unstable node, two saddles) so that the only stable state corresponds to quiescence (\quiescent\quiescent, \cref{fig:nonadaptive-sym-phases-kappa-grow} \cref{sym-kappa-grow-I}).
As we increase \( \kappa \), the topology of the phase space changes substantially in a global bifurcation\footnote{The location of this bifurcation was determined numerically by sampling many initial conditions and determining whether trajectories only  converge to a quiescent fixed point (stable node \quiescent\quiescent),  or whether some trajectories correspond to rotations (\spiking\spiking) in phase space.
} (dash-dotted line in \cref{fig:nonadaptive-sym-attractors}).
The stable and unstable manifolds of the two saddles merge to form a homoclinic connection for each saddle separately (these homoclinic orbits are highlighted in red and blue in~\cref{fig:nonadaptive-sym-phases-kappa-grow} \cref{sym-kappa-grow-II}).
Between these homoclinic orbits lie a foliation of periodic spiking orbits.
Thus this region displays bi-stability between quiescent (\quiescent\quiescent) and spiking (\spiking\spiking) states.
Further increasing \( \kappa \), we cross the boundary defined by \( \nu = \eta + 2\kappa - 1 = 0 \) (dashed line in Fig.~\ref{fig:nonadaptive-sym-attractors}), where the two fixed point conditions in \cref{eq:non-adaptive-fp} become identical (i.e., nullclines are identical) and produce a continuum of fixed points.
Above this global bifurcation, on one hand, the two saddles transform into centers surrounded by a foliation of periodic librating orbits (\quiescentlibration\quiescentlibration, \cref{fig:nonadaptive-sym-phases-kappa-grow} \cref{sym-kappa-grow-III}).
Vice versa, the stable and unstable nodes have become saddles with heteroclinic connections (green orbits in \cref{fig:nonadaptive-sym-phases-kappa-grow} \cref{sym-kappa-grow-III}) separating the librations from a foliation of rotations (\spiking\spiking).
Consequently, this region is bi-stable between co-existence of \quiescentlibration\quiescentlibration and \spiking\spiking states.

Finally, traversing saddle-node bifurcation \SNone located at \( \eta = 0 \) in the regime of positive coupling, \( \kappa > 0 \), while increasing $\eta$, the fixed points coalesce at the origin in a saddle-node bifurcation, leaving behind a spiking state (\spiking\spiking) only. This scenario is similar to the ones observed for \( \half < \kappa \leq 0 \).

\subsection{Dynamics for non-identical neurons\texorpdfstring{ (\( \eta_1 \neq \eta_2 \))}{}}\label{sec:non-identical_dynamics}
\subsubsection{Bifurcations of fixed points and limit cycles}
We now move on to the case of non-identical excitabilities (\( \eta_1 \neq \eta_2 \)).
In what follows, we will use the parametrization \( \eta_\sigma = (\eta_1 + \eta_2) / 2 \), \( \eta_\delta = (\eta_2 - \eta_1) / 2 \).
With this parametrization, the identical case can be seen as the limiting case with \( \eta_\delta = 0 \), \( \eta = \eta_\sigma \); similarly, we will here fix an \( \eta_\delta \neq 0 \), and look at bifurcations in \(( \eta_\sigma \),\( \kappa )\) space.
Without loss of generality, we will assume that \( \eta_\delta > 0 \) (\( \eta_2 > \eta_1 \)).
Further, we will use \( \eta_\delta = 0.1 \) throughout much of our analysis. The stability diagram and phase portraits in Figs.~\ref{fig:nonadaptive-asym-attractors} to \ref{fig:nonadaptive-asym-phases-locking} summarize the possible states and bifurcations.

As for the identical case, we consider first the situation of uncoupled neurons (\( \kappa = 0 \)).
Since \( \iota_1 = \eta_1 \neq \eta_2 = \iota_2 \), the two neurons will undergo SNIC bifurcations at different thresholds: \( \eta_\sigma + \eta_\delta = \eta_2 = \iota_2 = 0 \) and \( \eta_\sigma - \eta_\delta = \eta_1 = \iota_1 = 0 \) for neurons 2 and 1, respectively.
The non-identical uncoupled system can thus have 3 states:
When \( \eta_\sigma < -\eta_\delta \) both neurons are quiescent (\quiescent\quiescent), when \( -\eta_\delta < \eta_\sigma < \eta_\delta \), neuron 1 is quiescent while neuron two is spiking (\quiescent\spiking) and when \( \eta_\delta < \eta_\sigma \) both neurons are spiking (\spiking\spiking).
These transitions are shown in~\cref{fig:nonadaptive-asym-attractors}.
\begin{figure}[htp!]
  \centering
  \begin{overpic}[width=\columnwidth]{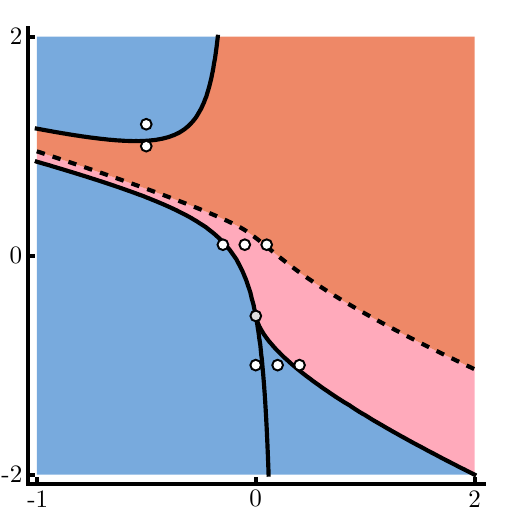}
    \put(96,4){\Large\( \eta_\sigma \)}
    \put(4,96){\Large\( \kappa \)}
    \put(36,71){\SNoneb} 
    \put(25,56.5){\SNonea} 
    \put(70,13){\SNtwo}
    \put(80,37){LPC}
    \put(51.5,38){Cusp}
    \put(43,48){\panellabel{asym-close0-I}}
    \put(47.5,48){\panellabel{asym-close0-II}}
    \put(51.5,48){\panellabel{asym-close0-III}}
    \put(47.5,24.5){\panellabel{asym-neg-I}}
    \put(53,24.5){\panellabel{asym-neg-II}}
    \put(57,24.5){\panellabel{asym-neg-III}}
    \put(30,69){\panellabel{asym-pos-I}}
    \put(30,76){\panellabel{asym-pos-II}}
    \put(22,35){\Large\quiescent\quiescent}
    \put(70,65){\Large\spiking\spiking}
    \put(22,83){\large\quiescent\quiescent}
    \put(70,25){\large\quiescent\spiking}
    \put(56,15){\large2\quiescent\quiescent}
  \end{overpic}
  
  \caption{
    Stability diagram for non-identical neurons in the non-adaptive system \cref{eq:reduced-model} with \(a=0\) and \( \eta_\delta = 0.1 \).
    Saddle-node bifurcations are shown as solid lines, and limit points of (LPC) cycles as a dashed line.
    Regions are colored based on their attractors: \quiescent\quiescent is blue, \quiescent\spiking is pink and \spiking\spiking is orange.
    Phase portraits for the points \ref{asym-close0-I}-\ref{asym-pos-II} are shown in \cref{fig:nonadaptive-asym-phases-kappa-close0,fig:nonadaptive-asym-phases-kappa-neg,fig:nonadaptive-asym-phases-kappa-pos}
  }
  \label{fig:nonadaptive-asym-attractors}
\end{figure}

For two coupled neurons (\( \kappa \neq 0 \)), the input to neuron 1 \( \iota_1 \) will vary with time , and so the \quiescent\spiking state does --- strictly speaking --- not exist.
Instead, we can observe the state \quiescentlibration\spiking, where neuron 1 is \enquote{chasing} a moving stable phase.
However, looking topologically at the phase space of the system, the \quiescentlibration\spiking state is not distinct from the \quiescent\spiking state, and we will therefore treat them as the same type of state.

Next, we focus on coupling \( \kappa \) close to 0, specifically less than the minimum value of the upper component of \SNoneb (\( \kappa < \abs{\eta_\delta} + \sqrt{2 \abs{\eta_\delta}} + \half \)) and greater than the value at the cusp point where \SNonea and \SNtwo meet (\( \kappa > -(1 + \abs{\eta_\delta}) / 2 \)).
In this region, the same bifurcations occur when varying \( \eta_\sigma \) as in the uncoupled case.
However, since the neurons now are coupled, we must consider the bifurcations of individual neurons as bifurcations of the entire system.
\begin{figure}[htp!]
  \centering
  \begin{overpic}[width=.3\columnwidth]{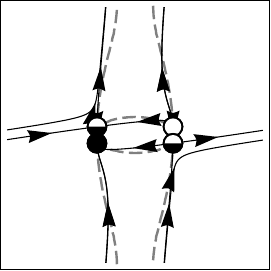}
    \put(5,86){\ref{asym-close0-I})}
  \end{overpic}
  \hfill
  \begin{overpic}[width=.3\columnwidth]{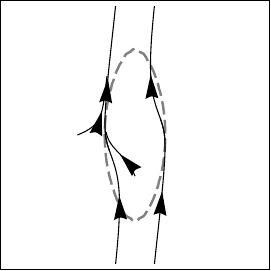}
    \put(5,86){\ref{asym-close0-II})}
  \end{overpic}
  \hfill
  \begin{overpic}[width=.3\columnwidth]{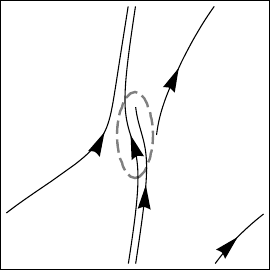}
    \put(5,86){\ref{asym-close0-III})}
  \end{overpic}
  
  \caption{
    Phase portraits for the non-adaptive system with \(a=0\), \(\kappa=0.1\), \( \eta_\delta = 0.1 \), and \(\eta_\sigma = -0.15,-0.05, 0.05\), corresponding to the parameter points \ref{asym-close0-I}-\ref{asym-close0-III} in \cref{fig:nonadaptive-asym-attractors}, respectively.
    (Nullclines are shown as gray dashed lines, stable/unstable/saddle fixed points are shown as filled/empty/half-filled circles.)
  }
  \label{fig:nonadaptive-asym-phases-kappa-close0}
\end{figure}
For small \( \eta_\sigma \), we have two saddles and a stable and an unstable node (\quiescent\quiescent, \cref{fig:nonadaptive-asym-phases-kappa-close0} \cref{asym-close0-I}).
As \( \eta_\sigma \) is increased and passes the lower component of \SNone, the two pairs of left and right fixed points respectively coalesce in a saddle-node bifurcation on \( \theta_2 = 0 \).
What is left is a stable and an unstable limit cycle that crosses \( \theta_2 = \pi \) (corresponding to a spike), but where only \( \theta_1 \) librates (\quiescent\spiking, \cref{fig:nonadaptive-asym-phases-kappa-close0} \cref{asym-close0-II}).
Further increasing \( \eta_\sigma \), the two limit cycles coalesce in a limit point of cycles (LPC), and for larger \( \eta_\sigma \) there are only periodic orbits where both neurons spike (\spiking\spiking, \cref{fig:nonadaptive-asym-phases-kappa-close0} \cref{asym-close0-III}).

Moving our attention to the region below the cusp point where \( \kappa < -(1 + \abs{\eta_\delta}) / 2 \), starting at small \( \eta_\sigma \) there are 4 fixed points (stable/unstable, 2 saddles) (\quiescent\quiescent, \cref{fig:nonadaptive-asym-phases-kappa-neg} \cref{asym-neg-I}).
\begin{figure}[htp!]
  \centering
  \begin{overpic}[width=.3\columnwidth]{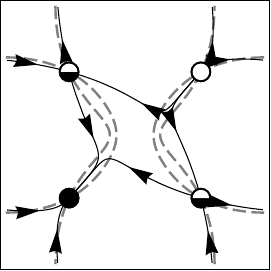}
    \put(5,86){\ref{asym-neg-I})}
  \end{overpic}
  \hfill
  \begin{overpic}[width=.3\columnwidth]{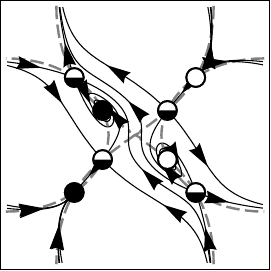}
    \put(5,86){\ref{asym-neg-II})}
  \end{overpic}
  \hfill
  \begin{overpic}[width=.3\columnwidth]{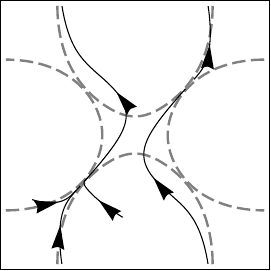}
    \put(5,86){\ref{asym-neg-III})}
  \end{overpic}
  
  \caption{
    Phase portraits for the non-adaptive system with \(a=0\), \(\kappa = -1\) and \( \eta_\delta = 0.1 \), and \(\eta_\sigma = 0, 0.1, 0.2\), 
    corresponding to the parameter points \ref{asym-neg-I}-\ref{asym-neg-III} in \cref{fig:nonadaptive-asym-attractors}, respectively.
    (Nullclines are shown as gray dashed lines, stable/unstable/saddle fixed points are shown as filled/empty/half-filled circles.)
  }
  \label{fig:nonadaptive-asym-phases-kappa-neg}
\end{figure}
Increasing \( \eta_\sigma \) across \SNone, four more fixed points (stable/unstable spirals, 2 saddles) which we denote as \enquote{inner fixed points} are created in a saddle-node bifurcation on \( \theta_2 = 0 \) (\cref{fig:nonadaptive-asym-phases-kappa-neg} \cref{asym-neg-II}).
Thus, these region exhibits bi-stability with two fixed points corresponding to quiescence, and so we denote the region with (2\quiescent\quiescent). 
Further increasing \( \eta_\sigma \) across \SNtwo, the inner and outer fixed points coalesce in each quadrant in a saddle-node bifurcation, and leaves a stable and an unstable limit cycle the only spike in neuron 2 (\quiescent\spiking, \cref{fig:nonadaptive-asym-phases-kappa-neg} \cref{asym-neg-III}).
Finally, as in the case for \( \kappa \) closer to 0, for large \( \eta_\sigma \) the two limit cycles coalesce and disappear in an LPC; thus, only states where both neurons spike (\spiking\spiking) are left.

Considering larger \( \kappa>0 \), for \( \eta_\sigma < \abs{\eta_\delta} \), there is a further saddle-node bifurcation \SNoneb bordering the \spiking\spiking-region.
\begin{figure}[htp!]
  \centering
  \begin{overpic}[width=.3\columnwidth]{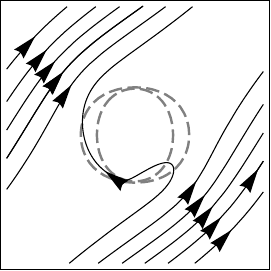}
    \put(85,86){\ref{asym-pos-I})}
  \end{overpic}
  \hspace{.05\columnwidth}
  \begin{overpic}[width=.3\columnwidth]{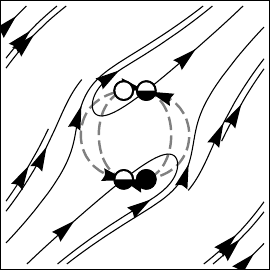}
    \put(30,86){\ref{asym-pos-II})}
  \end{overpic}

  \caption{
    Phase portraits for the non-adaptive system with \(a=0\), \(\eta_\sigma=-0.5\) and \( \eta_\delta = 0.1 \) with \(\kappa = 1, 1.1\), corresponding to the parameter points \ref{asym-pos-I} and \ref{asym-pos-II} in \cref{fig:nonadaptive-asym-attractors}, respectively.
    (Nullclines are shown as gray dashed lines, stable/unstable/saddle fixed points are shown as filled/empty/half-filled circles.)
  }
  \label{fig:nonadaptive-asym-phases-kappa-pos}
\end{figure}
Crossing \SNoneb  from inside the \spiking\spiking-region by increasing \( \kappa \), 4 fixed points (stable/unstable node, 2 saddles) appear in a saddle-node bifurcation on \( \theta_1 = 0 \), see \cref{fig:nonadaptive-asym-phases-kappa-pos} panels \ref{asym-pos-I}) and \ref{asym-pos-II}).

\subsubsection{Mode-locking}
The \spiking\spiking region (orange shading in \cref{fig:nonadaptive-asym-attractors}) features mode-locking regions, where the spiking activity of the two neurons locks into ratios of fixed frequency.
These mode-locking regions are defined by a constant average ratio, $R$, of spikes $n_1(t)$ and $n_2(t)$ occurring in neurons 1 and 2 over a time period $t$. Formally, we express this ratio as
\begin{equation}
    R = \lim_{t \to \infty} \frac{n_1(t)}{n_2(t)} =  \lim_{t \to \infty} \frac{\int_0^t \delta(\theta_1(s) - \pi) \dl s}{\int_0^t \delta(\theta_2(s) - \pi) \dl s},
\end{equation}
where \( \delta \) is the Dirac distribution on the circle \( \torus \).
This ratio $R$ equals the inverse fraction of the periods \( T_k \) of the two nodes.
Thus, orbits can only be periodic (and possibly be locked with the corresponding frequency mode) whenever $R$ is a rational number, \( R = p:q\) with \( p,q\in\naturals \).

\begin{figure}[htp!]
  \centering
  \begin{overpic}[width=\columnwidth]{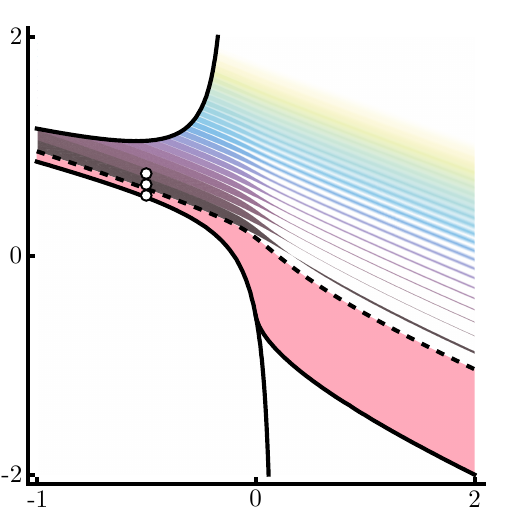}
    \put(96,4){\Large\( \eta_\sigma \)}
    \put(4,96){\Large\( \kappa \)}
    \put(28,58){\panellabel{asym-locking-I}}
    \put(30,63){\textcolor{white}{\panellabel{asym-locking-II}}}
    \put(30,66){\textcolor{white}{\panellabel{asym-locking-III}}}
    \put(22,35){\Large\quiescent\quiescent}
    \put(70,65){\Large\spiking\spiking}
    \put(22,83){\large\quiescent\quiescent}
    \put(70,25){\large\quiescent\spiking}
    \put(56,15){\large2\quiescent\quiescent}
  \end{overpic}
  \caption{
    Stability diagram for the non-adaptive system \cref{eq:reduced-model} with \(a=0\), \( \eta_\delta = 0.1 \).
    The \spiking\spiking region exhibits mode-locking for the spiking frequency. We only highlighted a selection of the frequency ratios, \( R = n/(n+1),\, n \in \naturals \). For a more complete view on the underlying structure, see the staircase shown in Fig.~\ref{fig:nonadaptive-asym-ratio-bifurcation}. 
    Furthermore the ratio $R$ was estimated numerically and cannot resolve every mode-locking ratio.
    Phase portraits for the points \ref{asym-locking-I}-\ref{asym-locking-III} are shown in \cref{fig:nonadaptive-asym-phases-locking}.
  }
  \label{fig:nonadaptive-asym-ratio}
\end{figure}

The mode-locking regions form Arnol'd tongues that emanate from the limit of zero coupling with $\kappa=0$.
The period of uncoupled neurons $k=1,2$ with \( \kappa = 0 \), $T_k=\pi/\sqrt{\eta_k}$ , is easily obtained analytically~\cite{gutkin2022theta},  and we obtain
\begin{equation}\label{eq:spikingratio}
    R = \frac{T_2}{T_1} =  \frac{\sqrt{\eta_1}}{\sqrt{\eta_2}} = \frac{\sqrt{\eta_\sigma - \eta_\delta}}{\sqrt{\eta_\sigma + \eta_\delta}}.
\end{equation}
Thus, Arnol'd tongues have their origin at $\kappa=0$ with excitability values $\eta_1$ and $\eta_2$,  consistent with the condition \(R \in \rationals\) according to \eqref{eq:spikingratio}.
 
However, to classify the neuron's spiking activity by their spiking ratio \( R \) for non-zero coupling (\( \kappa \neq 0 \)), we need to approximate the spiking ratios by numerically integrating trajectories (limit cycles) and measuring the ratio \( n_1(T) : n_2(T) \) for a large time \( T \).
To detect the mode-locking regions shown in \cref{fig:nonadaptive-asym-ratio}, we estimated spiking ratios that lie within a tolerance of 0.003. A variety of ratios occur, but we specifically chose to only highlight ratios of the form \( n : (n + 1)\) where \( n \in \naturals \), since those ratios correspond to the widest regions and, thus, better reveal the parameter dependence in the stability diagram.

For negative and smaller positive values of \( \eta_\sigma \), the \( n : (n + 1) \) regions cover almost all of the parameter-space, while they almost abruptly become much thinner around a certain small positive value of $\eta_\sigma$.
In \cref{fig:nonadaptive-asym-phases-locking}, we show the phase portraits for the modes \( 0 \), \( 1 : 2 \) and \( 2 : 3 \).
\begin{figure}[htp!]
  \centering
  \begin{overpic}[width=.3\columnwidth]{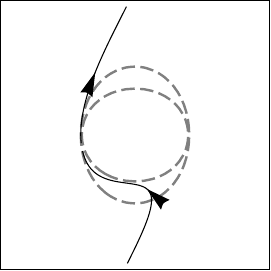}
    \put(5,86){\ref{asym-locking-I})}
  \end{overpic}
  \hfill
  \begin{overpic}[width=.3\columnwidth]{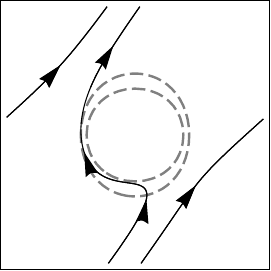}
    \put(5,86){\ref{asym-locking-II})}
  \end{overpic}
  \hfill
  \begin{overpic}[width=.3\columnwidth]{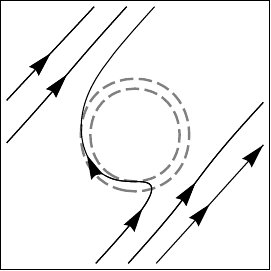}
    \put(5,86){\ref{asym-locking-III})}
  \end{overpic}
  
  \caption{
    Phase portraits for the non-adaptive system with \(a=0\), \(\eta_\sigma=-0.5\) and \( \eta_\delta = 0.1 \) with \(\kappa = 0.55, 0.65, 0.75\), corresponding to the parameter points \ref{asym-locking-I}-\ref{asym-locking-III} in \cref{fig:nonadaptive-asym-ratio}, respectively.
    (Nullclines are shown as gray dashed lines, stable/unstable/saddle fixed points are shown as filled/empty/half-filled circles.)
  }
  \label{fig:nonadaptive-asym-phases-locking}
\end{figure}
Note that the mode with \(  0 : 1 = 0\) corresponds to the \quiescent\spiking state we already discussed further above.

\begin{figure}[htp!]
  \centering
  \begin{overpic}[width=\columnwidth]{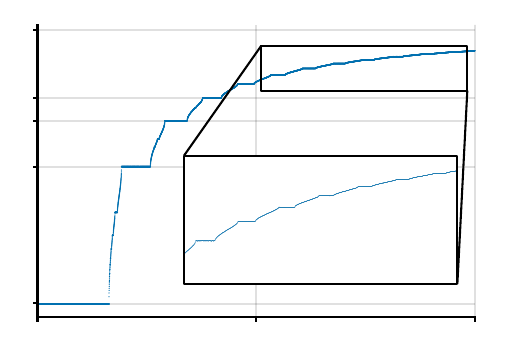}
    \put(5,63){\Large\( R \)}
    \put(94,4){\Large\( \kappa \)}
    \put(-1,59){\( 1:1 \)}
    \put(-1,46.5){\( 3:4 \)}
    \put(-1,41.7){\( 2:3 \)}
    \put(-1,33){\( 1:2 \)}
    \put(4,6.4){\( 0 \)}
    \put(6.5,0){\( 0 \)}
    \put(48,0){\( 0.5 \)}
    \put(92,0){\( 1 \)}
  \end{overpic}
  \caption{   
    Numerical estimation of frequency ratios of spikes $R$ for varying $\kappa$ with $\eta_\sigma=0$.
    Plateaus of frequency-locking corresponding to the colored regions in \cref{fig:nonadaptive-asym-ratio} can be seen, and form a devil's staircase. 
  }
  \label{fig:nonadaptive-asym-ratio-bifurcation}
\end{figure}
Measuring the spike ratio with \( \eta_\sigma \) fixed while increasing the coupling strength \(\kappa\), 
\cref{fig:nonadaptive-asym-ratio-bifurcation} reveals a Devil's staircase formed by the mode-locking regions.
The \( n : (n + 1) \) modes are visible as steps in the staircase, but we also observe many other locked modes which appear smaller in size.

\subsubsection{Comparison of dynamics occurring for  identical versus non-identical neurons}\label{sec:nonadaptive-comparison}
While the non-adaptive system features three control parameters (\( \eta_\sigma \), \( \eta_\delta \) and \( \kappa \)), the special case of identical excitabilities is represented by the parameter plane with \( \eta_\delta = 0 \).
We have seen that a variety of dynamic structures observed for that plane, including librations around center points and bi-stability, disappear in the non-identical case. These structures are \emph{not structurally robust} towards perturbations such as breaking the system symmetry of $\eta_\delta=0$, and the only structures that are left are fixed points, limit cycles and saddle-node bifurcations.
Notably, mode-locking regions are entirely absent for identical neurons. Although, in some sense one might be tempted to say that the spiking orbits in the \spiking\spiking region are locked; but this interpretation is problematic since the center orbits are not attractive and thus in reality provide no locking mechanism.


\section{Analysis for adaptive coupling\texorpdfstring{ (\( a > 0 \))}{}}
We now consider the dynamics for adaptive coupling  with \( a > 0 \). For simplicity we set the baseline coupling \( b = 0 \) and the time scale of adaptation to \( \varepsilon = 0.01 \).
Since the dynamics for identical excitabilities in the non-adaptive case turned out to be structurally non-robust (\cref{sec:nonadaptive-comparison}), we focus on non-identical excitabilities with \( \eta_\delta = (\eta_2 - \eta_1) / 2 = 0.1 \).
The remaining control parameters are then the average excitability \( \eta_\sigma = (\eta_1 + \eta_2) / 2 \) and the adaptivity parameter \( a \).


\subsection{Regimes of quiescent neurons (\quiescent\quiescent) and of one spiking neuron (\quiescent\spiking)}
We first focus on the two states of two quiescent neurons (\quiescent\quiescent) and of one single spiking neuron (\quiescent\spiking). The corresponding stability regions are shown in \cref{fig:coevolutionary-asym-attractors} with distinct gray shadings.
\begin{figure}[htp!]
  \centering
  \begin{overpic}[width=\columnwidth]{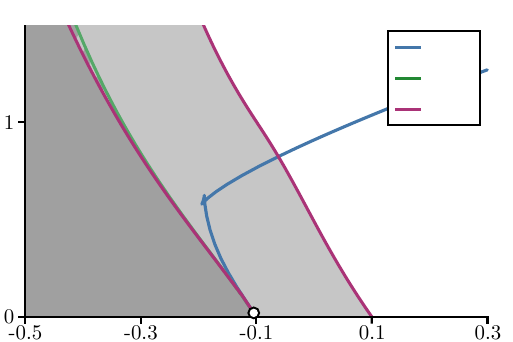}
    \put(96,5){\Large\( \eta_\sigma \)}
    \put(4,63){\Large\( a \)}
    \put(50,7){BT}
    \put(21,30){\LPCone}
    \put(59,30){\LPCtwo}
    \put(18,17){\large\quiescent\quiescent}
    \put(49,17){\large\quiescent\spiking}
    \put(83.5,56.5){SN}
    \put(83.5,50.25){Hopf}
    \put(83.5,44){LPC}
  \end{overpic}
  \caption{
    Stability diagram for the model with adaptation. Parameters are  \( \varepsilon = 0.01 \), \( b = 0 \), and \( \eta_\delta = 0.1 \).
  }
  \label{fig:coevolutionary-asym-attractors}
\end{figure}
The non-adaptive case with \( a = 0 \) corresponds in the asymptotic time limit to the uncoupled system ($\kappa=0$) studied in \cref{sec:non-identical_dynamics}; namely,
the system undergoes the transition called \SNone from \quiescent\quiescent to  \quiescent\spiking in a SNIC bifurcation at \( \eta_\sigma = - \abs{\eta_\delta} = -0.1 \).
However, note that two changes occur at the SNIC bifurcation:
firstly, increasing \( \eta_\sigma \), two fixed points coalesce and disappear at the SNIC;
secondly, while the associated fixed points previously lived on an invariant and stable curve in phase space, this curve transforms into a limit cycle.

For non-zero adaptivity \( a \neq 0 \), the SNIC bifurcation disappears and instead a region of bi-stability between \quiescent\quiescent and \quiescent\spiking states is formed, as shown in the bifurcation diagram in \cref{fig:coevolutionary-asym-bifurcation-diagram-kappa-minmax} for a small adaptivity value $a=0.01$.  
\begin{figure}[htp!]
  \begin{overpic}[width=\columnwidth]{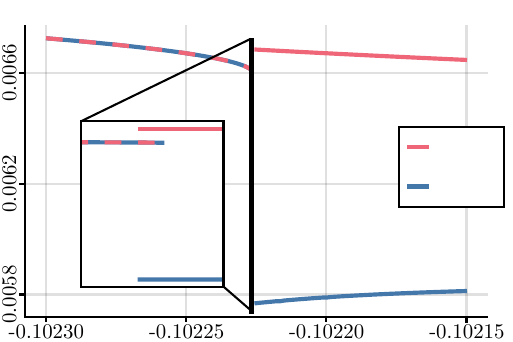}
    \put(96,6){\Large\( \eta_\sigma \)}
    \put(3,63){\Large\( \kappa \)}
    \put(84.5,36.8){\(\max_t(\kappa)\)}
    \put(84.5,29.2){\(\min_t(\kappa)\)}
    \put(10,51){\large \quiescent\quiescent}
    \put(54,34){\large \quiescent\spiking}
  \end{overpic}
  \caption{
    Bifurcation diagram displaying minima and maxima detected in time, i.e., \( \min_t(\kappa) \) and \( \max_t(\kappa) \)  for \( a = 0.01 \) (\( \eta_\delta = 0.1 \), \( \varepsilon = 0.01 \) and \( b = 0 \)) when performing quasi-continuation from both the left and right.
    The maxima are red lines and the minima blue lines.
  }
  \label{fig:coevolutionary-asym-bifurcation-diagram-kappa-minmax}
\end{figure}
The two topological changes in phase space now occur in different bifurcations; the fixed points coalesce in a saddle-node bifurcation (SN, blue line in \cref{fig:coevolutionary-asym-attractors,fig:coevolutionary-asym-attractors-small-a}), while the limit cycle is born from a limit point of cycles (\LPCone, purple line in \cref{fig:coevolutionary-asym-attractors,fig:coevolutionary-asym-attractors-small-a}). 
\begin{figure}[htp!]
  \begin{overpic}[width=\columnwidth]{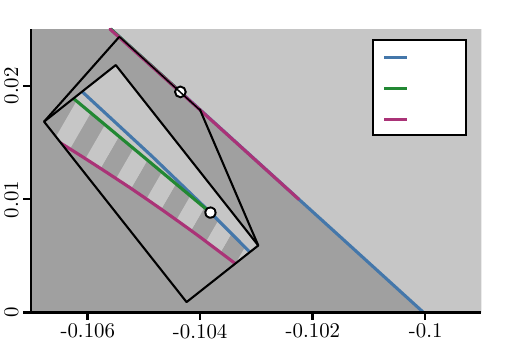}
    \put(96,4){\Large\( \eta_\sigma \)}
    \put(5,63){\Large\( a \)}
    \put(37,47.5){BT}
    \put(34,24){BT}
    \put(55,15){\large\quiescent\quiescent}
    \put(75,25){\large\quiescent\spiking}
    \put(81,54.5){SN}
    \put(81,48.25){Hopf}
    \put(81,42){LPC}
  \end{overpic}
  \caption{Stability diagram of attractors of the co-evolutionary system for small \( a \), \( \eta_\delta = 0.1 \).}
  \label{fig:coevolutionary-asym-attractors-small-a}
\end{figure}

For \( a \approx 0.0195 \), a Bogdanov--Takens bifurcation BT (see \cref{fig:coevolutionary-asym-attractors-small-a}) occurs on the saddle-node curve SN.
Above the BT point, the quiescent (\quiescent\quiescent) state loses stability in a subcritical Hopf bifurcation H (green line in \cref{fig:coevolutionary-asym-attractors,fig:coevolutionary-asym-attractors-small-a}); the resulting unstable \quiescent\quiescent state is annihilated in SN.
Above the BT, the bi-stable region between \quiescent\quiescent and \quiescent\spiking is therefore bounded by the \LPCone and H curves.
Thus, for \( a \)-values above BT, the saddle-node bifurcation SN no longer influences the stability regions, as it merely annihilates a fixed rendered unstable by the Hopf bifurcation (The associated saddle-node curves undergo two cusp bifurcations which we for simplicity did not highlight.).
Note that --- even for larger \( a \) --- the bi-stable region is very thin and is therefore shown enlarged in the inset in  \cref{fig:coevolutionary-asym-attractors-small-a}.

Lastly, we observe that the single neuron spiking state \quiescent\spiking is annihilated for large \( \eta_\sigma \) in another limit point of cycles (\LPCtwo, \cref{fig:nonadaptive-asym-attractors}).
Note that this boundary only denotes where the \quiescent\spiking ends; the \spiking\spiking states are sometimes bi-stable with the \quiescent\spiking states. We discuss the nature and parameter range for \spiking\spiking states in the following section.

\subsection{Regime of spiking neurons (\spiking\spiking)}
We now turn our attention to the regime in which two neurons are spiking (\spiking\spiking). Since the \spiking\spiking states display mode-locking (\cref{sec:non-identical_dynamics}), the interesting question arises of how these modes are affected by adaptive dynamics when $a>0$.
As for the non-adaptive case, we are concerned with the spiking ratio $R$ of \spiking\spiking states, in particular periodic oscillations with \( R \in \rationals \), for which mode-locking is present.

As we vary the adaptivity with (\( a > 0 \)), the Arnol'd tongues associated to each mode-locking regions become apparent, see \cref{fig:coevolutionary-asym-ratio-many}(a).
\begin{figure}[htp!]
  \centering
  \begin{overpic}[width=\columnwidth]{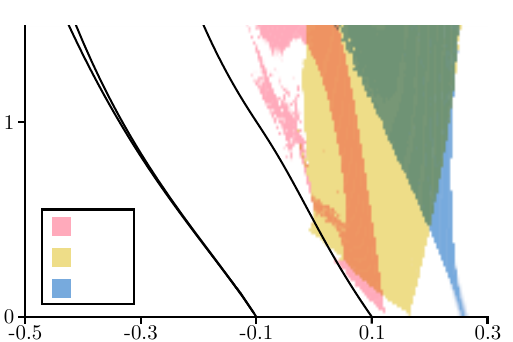}
    \put(-1,60){a)}
    \put(96,5){\Large\( \eta_\sigma \)}
    \put(4,63){\Large\( a \)}
    \put(15,21.4){\( n \):\( 3n \)}
    \put(15,15.2){\( n \):\( 2n \)}
    \put(15,9){\( 2n \):\( 3n \)}
    \put(9,50){\LPCone}
    \put(42,58){\LPCtwo}
    \put(30,10){\large\quiescent\quiescent}
    \put(54,10){\large\quiescent\spiking}
  \end{overpic}
  \\ \medskip
  \begin{overpic}[width=\columnwidth]{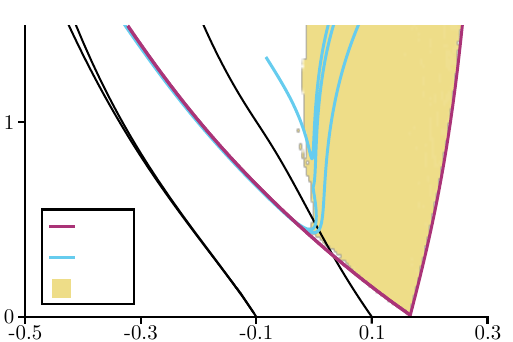}
    \put(-1,60){b)}
    \put(5,42.8){\tikz \draw[dashed,gray] (0,0) -- (7.8cm,0);}
    \put(5,111.7){\tikz \draw[dashed,gray] (0,0) -- (7.8cm,0);}
    \put(96,5){\Large\( \eta_\sigma \)}
    \put(4,63){\Large\( a \)}
    \put(16,21.4){LPC}
    \put(16,15.2){PD}
    \put(16,9){\( n \):\( 2n \)}
    \put(9,50){\LPCone}
    \put(42,58){\LPCtwo}
    \put(30,10){\large\quiescent\quiescent}
    \put(54,10){\large\quiescent\spiking}
  \end{overpic}
  \caption{
    Stability diagrams for the co-evolutionary adaptive model with \( \varepsilon = 0.01 \),  \( b = 0 \), and \( \eta_\delta = 0.1 \). Different aspects of interest are highlighted in panels (a) and (b).
    {(a)} \spiking\spiking states are limit cycles which mode-lock with various ratios \( R \). Regions corresponding to ratios \( 1:3 \), \( 1:2 \) and \( 2:3 \) are highlighted.
    Selected bifurcations curves shown in \cref{fig:coevolutionary-asym-attractors} are displayed for guidance (black lines).
    {(b)} Bifurcations for the mode-locking region with \( R = n:2n =  1:2 \) (highlighted in yellow).
    Bifurcations related to limit cycles with this ratio $R$ are shown as colored curves:
    limit point of cycles (LPC) bounding the $n:2n$ limit cycles are shown in purple; associated period doubling (PD) bifurcations are shown in light blue.
    The dashed gray lines indicate the parameter sweep for the diagrams in \cref{fig:coevolutionary-asym-bifurcation-diagram-cascade}.
  }
  \label{fig:coevolutionary-asym-ratio-many}
\end{figure}
Interestingly, we observe that the tongues are widening as the adaptivity is increased---this indicates a relationship between the adaptivity $a$, and the coupling strength, $\kappa$, which typically is associated with the widening off mode-locking regions and their associated Arnol'd tongues. Furthermore, multi-stability is possible between the different \spiking\spiking mode-locking regimes, as well as between the \quiescent\spiking state and certain \spiking\spiking modes.
The data in \cref{fig:coevolutionary-asym-ratio-many}(a) was generated by estimating \( R \) numerically for initial conditions on a grid with even spacing (10, 10, and 7 grid points along the coordinates $\theta_1$, $\theta_2$, $\kappa$, respectively), and therefore the regions identified are not exhaustive.

To further understand the Arnol'd tongues, we focus on the mode with \( R = 1:2 \).
Numerical continuation with MatCont reveals that the tongue is bounded on both sides by LPCs (purple lines in \cref{fig:coevolutionary-asym-ratio-many}(b)).
For larger adaptivity \( a \), numerical continuation also shows that the \spiking\spiking limit cycle may undergo a series of period doubling bifurcations (light blue lines in \cref{fig:coevolutionary-asym-ratio-many}(b)).
Shown are the first three bifurcations of a series of period doubling bifurcations,  forming a cascade that ultimately leads to chaos as  decrease $\eta_\sigma$, as seen in \cref{fig:coevolutionary-asym-bifurcation-diagram-cascade}(b). Throughout this cascade and far into the chaotic region, the ratio $R$ remains constant (see \cref{fig:coevolutionary-asym-bifurcation-diagram-cascade}(a); however, at some critical value of $\eta_\sigma$  (indicated by an arrow), there appears to be a crisis of (chaotic) attractors, which possibly even are associated to other resonant modes. As a consequence, the fixed spiking ratio $R$ is compromised and sharp mode-locking regions appear only in small windows of intermittency. However, inspection \cref{fig:coevolutionary-asym-bifurcation-diagram-cascade}(b) suggests that chaotic regions between intermittencies feature a blurred mode-locking ratios, i.e., there are no sharp mode-locking steps due to chaos. As we decrease $\eta_\sigma$ further, the chaotic motion subsides together with the spiking modes, i.e., the spiking ratio drops to $R=0$. We leave a more detailed analysis for the chaotic region and its associated bifurcations for a future study.

\begin{figure}[htp!]
  \centering
  \begin{overpic}[width=\columnwidth]{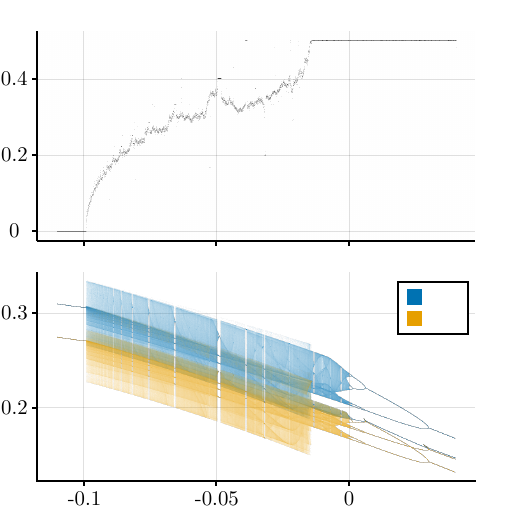}
  \put(-1,94){a)}
  \put(-1,45){b)}
  \put(60,94){$\downarrow$}
  \put(60,34){$\downarrow$}
    \put(6,48){\large\( \kappa \)}
    \put(95,5){\Large\( \eta_\sigma \)}
    \put(6,95){\large\( R \)}
    \put(95,52){\Large\( \eta_\sigma \)}
    \put(83.5,41){\(\max_t(\kappa)\)}
    \put(83.5,36.7){\(\min_t(\kappa)\)}
  \end{overpic}
  \caption{
    Spiking ratios $R$ (a) and the bifurcation diagram (b)
    for varying $\eta_\sigma$ with adaptivity \( a = 1 \) (gray dashed line in \cref{fig:coevolutionary-asym-ratio-many}) and fixed parameters \( \eta_\delta = 0.1 \), \( \varepsilon = 0.01 \) and \( b = 0 \). 
    The bifurcation diagram in panel (b) reports local minima and maxima in time, \( \min_t(\kappa) \)  and \( \max_t(\kappa) \).
    Both diagrams are produced via quasi-continuation from right to left.
  }
  \label{fig:coevolutionary-asym-bifurcation-diagram-cascade}
\end{figure}


\section{Discussion}
We have investigated the dynamics of a pair of Theta neurons with non-autaptic coupling, i.e., there is only coupling between the two neurons without self-interaction. The first part of the study concerned the case of stationary coupling strengths, and the second part includes an ad-hoc adaptation rule for the coupling strength~\cite{Berner2019,juettner2023adaptive}, which effectively leads to the emergence of co-evolutionary network dynamics~\cite{gross2008adaptive,berner2023adaptive}.

The non-adaptive case, \( a=0 \), leads already to a surprisingly wide range of dynamical phenomena, when comparing \eqref{eq:reduced-model} to the Kuramoto model with $N=2$ oscillators~\cite{juettner2023adaptive}. The principal reason for this is that the Kuramoto model inherently exhibits a phase shift invariance; thus, a two oscillator model can be reduced to a one dimensional system 
\begin{align}\label{eq:KM_reduced_N2}
 \dot{\phi}&= \omega + K\sin{\phi},
\end{align}
describing the evolution of the phase difference $\phi(t)$ and frequency difference $\omega$ of the two oscillator nodes{, respectively}. Thus, the system exhibits only two simple behaviors, drift ($K<|\omega|$) and mode-locking ($K>|\omega|$), where only a 1:1 frequency ratio between oscillators is possible. Due to the excitable nature of the Theta neuron, they do not possess a phase shift symmetry; consequently, the system cannot be reduced and is genuinely two dimensional.

For the case of identical neurons, $\eta_\delta = 0$,  attractive states of quiescence (\quiescent\quiescent), librations (\quiescentlibration\quiescentlibration) and spiking in both neurons (\spiking\spiking) are observed. {These states transition through various bifurcation scenarios of the saddle-node type and display regions of bi-stability}.
{The identical nature of the neurons implies symmetric dynamics, i.e., the two neurons display identical behavior.
It is useful to see the parameter plane defined by \( \eta_\delta = 0 \) as the limiting collection of degenerate bifurcations for the more general case of non-identical neurons. Indeed, librations (around centers) and the spiking rotations are foliated in phase space; however, these dynamics are not structurally robust.}
 
For non-identical neurons, $\eta_\delta \neq 0$, centers  transform into spirals and the librations are thus not any longer present; furthermore, the broken symmetry implies that neurons adhere to distinct dynamics --- e.g., we observe states where only the one neuron with larger excitability) is spiking (\quiescent\spiking).
In contrast to the identical case, bi-stable configurations are entirely absent.
In the regime where both neurons are {excited and} spiking (\spiking\spiking), the neuron with higher excitability appears to always spike at a higher rate. While the Kuramoto model~\eqref{eq:KM_reduced_N2} only allows for a single locking mode, non-identical excitabilities break the symmetry of the system resulting in a fan of locking regions with varying spiking ratios \( R \in \rationals\). As characteristic time scales of the individual phases are varied, either via $\kappa$ or $\eta_\sigma$, non-identical excitabilities allow spiking neurons to resonate in  modes corresponding to various frequency ratios. The resulting Arnol'd tongues form a devil's stair case, and, as we continuously increase the coupling strength, $\kappa$, the associated mode locking regions decrease their width, while the frequency ratio asymptotically converges towards $R \to 1$ --- i.e., a locking state, which most closely resembles the frequency locked state of the Kuramoto model.

Several new dynamical phenomena/features appear as we allow the coupling to adapt {with $a>0$}. 
The system is now three dimensional, thus inadvertently implying the possibility for richer dynamics.
First, regions of bi-stability are observed between quiescent (\quiescent\quiescent) and single-neuron spiking (\quiescent\spiking).
Second, the mode-locking regions gain in width as the adaptivity parameter $a$ is increased. For sufficiently large adaptivity, Arnol'd tongues begin to overlap, even multiple of them, so that topologically several modes of limit-cycles can exist at the same time --- in other words, the steps in the stair case begin to overlap. Thus, a very high multiplicity of mode-locking configurations is offered by the system to reside in. 
As $a$ is further increased, oscillations in each locking region (with fixed $R$) undergo a period doubling cascade  to chaos with quasi-periodic orbits. In some parameter regions, chaotic attractor{s even} appear to collide in more complicated bifurcation scenarios. As a result, the stair case appears to get 'blurred' by chaotic dynamics (see Fig.~\ref{fig:coevolutionary-asym-bifurcation-diagram-cascade}).

It is interesting to {compare the non-adaptive and the adaptive dynamics as observed} in different oscillator models. As is the case in the present study, adaptive Kuramoto oscillators with $N=2$ display a widening of the locking region with increasing adaptivity $a$, together with bi-stability of drift and locked oscillator dynamics. While the Kuramoto model only offers a single control parameter with the frequency difference $\omega$, in the Theta neuron model we may chose from two parameters, the average and difference of excitabilities, $\eta_\sigma$  and $\eta_\delta$, respectively. {To investigate the (possible) widening of} locking regions, we chose $\eta_\sigma$ as the control parameter, since {fixing the} value $\eta_\delta\neq 0$ already guarantees the heterogeneity necessary to observe locking.
In simple terms, the nature of {adaptive coupling}, leading to co-evolutionary network dynamics, leaves more 'space' for mode-locking: {the coupling may adapt in a way that accommodates better for} heterogeneity, i.e., non-identical frequencies in the Kuramoto model and excitabilities in the Theta neuron model, respectively. Consequently, the coupling adapts while maintaining stability of the locked mode and thus 'stretches' the width of the locking region. This effect is seen in both models.
Finally, it is interesting to note --- in a certain sense --- a structural similarity between the adaptation rule in Eq.~\eqref{eq:adaptationrule1} and the phase dynamics of the Kuramoto model in Eq.~\eqref{eq:KM_reduced_N2}: in either of the two models, locked states imply a fixed (but adaptive) relationship between phase difference and coupling strength.

An open challenge presents itself with the question as of what dynamics result from the adaptive Theta neuron model~\eqref{eq:reduced-model} for large $N$. A recent study  performed a careful bifurcation analysis for the adaptive Kuramoto model~\cite{CestnikMartensAdaptive2024} under the two assumptions that $\epsilon$ be small and that $N\to\infty$. Using an (approximative) dimensional reduction of the system by averaging the rows of the associated coupling matrix, {the subsequent analysis was based on studying a} self-consistency equation. A similar approach for the system considered here seems momentarily unreachable, since the formulation of a self-consistency equation for Theta-neurons first would need to be shown to be feasible. However, the approximate dimensional reduction via row-averaging the coupling matrix is certainly feasible and may yield interesting insights into the dynamics. Generally speaking, a formidable challenge remains to develop other means for the dimensional reduction of adaptive oscillator networks and co-evolutionary network dynamics.

{A difficulty in studying systems with $N>2$ oscillators is to classify the mode-locking states in a meaningful way --- indeed, large oscillator systems are known to exhibit a multitude of frequency clusters~\cite{Berner2019}}. Following the approach of this study, i.e., using the adaptivity parameter $a$ to perturb a way from non-adaptive case, it makes sense to consider methods from the analysis of stationary networks with non-uniform coupling corresponding to the non-adaptive limit where \(a=0\). Indeed, the adaptive nature of the coupling may lead to simple stationary structures corresponding to submatrices (blocks) of uniform coupling within the coupling matrix, similar to models of coupled oscillators arranged in $M$ populations~\cite{bick2020understanding}. Indeed, such an approach was recently used~\cite{duchet_bick_2023} to study the effect of time-dependent or adaptive coupling in models with subpopulations, though with (non-adaptive) fixed size. Other authors have used methods from representation theory to determine frequency clusters in oscillator systems with stationary coupling~\cite{sorrentino2016complete}. 
It will be interesting to see if such methods may be adopted to adaptive network models. 
 
Finally, to make further progress in the mathematical theory of co-evolutionary network dynamics there is a certain need to clarify the relationship between the ad-hoc adaptation rule used in this study (which can be regarded as a truncated Fourier series) and more realistic plasticity models used in computational neuroscience such as spike-time-dependent plasticity~\cite{GerstnerKistler2002,YamakouJinjieMartens2024}. 
While the Theta neuron model lends itself for a simplistic formulation of the adaptation rule on the phase domain, while standard plasticity models for the equivalent quadratic integrate and fire model are formulated on the time domain, i.e., spikes are integrated on the time rather than the phase domain.
While some work has been formulated to relate the two dynamic representations~\cite{lucken2016noise}, a rigorous mathematical treatise appears to be missing. Alternatively, it could also be worthwhile to compare the dynamics resulting from the model studied here with the adaptive corollary of a QIF model in the large $N$ limit. Finally, the adaptation rule in~\eqref{eq:adaptationrule1} formulates a \emph{symmetric} response in terms of the phase difference; synaptic regulation more naturally adheres to an asymmetric response, respecting the different roles of pre- and post-synaptic dynamics. A future study could address the effect of such a model. 


\begin{acknowledgments}
We wish to acknowledge R. Cestnik and C. Bick for helpful comments and discussions.
We gratefully acknowledge financial support from the Royal Swedish Physiographic Society of Lund.
\end{acknowledgments}

\appendix

\section{Solutions of a quadratic in an interval} \label{app:polynomial-in-interval}
Here we present a derivation of the number of (fixed point) solutions to \cref{eq:fixedpointquadratic}.
We begin with a general lemma on solutions of a quadratic in an interval:
\begin{lemma} \label{lem:polynomial-in-interval}
  The polynomial
  \begin{equation} \label{eq:lemma-polynomial}
    p(x) = a x^2 + b x + 1
  \end{equation}
  has the following number and locations of solutions in the interval \( x \in [0, 1] \):

  \begin{table}[htp!]
    \begin{center}
      \begin{tabular}{llll}
        \toprule
        \multicolumn{3}{c}{Criteria} & Solutions \\
        \midrule
        \( a + b + 1 < 0 \)  & & & 1 solution \( x \in (0, 1) \) \\
        \( a + b + 1 = 0 \), & \( a \leq 1 \) & & 1 solution \( x = 1 \) \\
                             & \( a > 1 \) & & 2 sols. \( x_1 \in (0, 1) \), \( x_2 = 1 \) \\
        \( a + b + 1 > 0 \), & \( b \geq -2 \) & & No solution \\
                             & \( b < -2 \), & \( 4a > b^2 \) & No solution \\
                             &          & \( 4a = b^2 \) & 1 solution \( x \in (0, 1) \) \\
                             &          & \( 4a < b^2 \) & 2 solutions \( x_1, x_2 \in (0, 1) \) \\
        \bottomrule
      \end{tabular}
    \end{center}
  \end{table}

  Note that all \( a,b \in \reals \) are covered in the table above.
\end{lemma}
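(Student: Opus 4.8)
The plan is to organize everything around the two fixed evaluations $p(0) = 1 > 0$ and $p(1) = a + b + 1$, since the constant term being exactly $1$ is what makes the table's criteria clean. The three main branches of the table are precisely the sign cases of $p(1)$, so I would case-split on $\sign(a+b+1)$ and, in each case, locate the roots relative to $[0,1]$ using the intermediate value theorem together with Vieta's formulas (product of roots $=1/a$, sum $=-b/a$) and the discriminant $b^2 - 4a$.

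For $p(1) < 0$: since $p(0) = 1 > 0$ and $p(1) < 0$, the IVT gives at least one root in the open interval $(0,1)$. Because a quadratic has at most two real roots and a second root in $(0,1)$ would force $p$ back to a positive sign at $x=1$, contradicting $p(1)<0$, there is exactly one root, lying in $(0,1)$ as the endpoints are nonzero (the degenerate linear case $a=0$ is handled the same way directly). For $p(1) = 0$: here $x=1$ is a root. If $a=0$ then $b=-1$ and $p(x)=1-x$ has only $x=1$. If $a \neq 0$, Vieta gives the second root $r = 1/a$, and I simply test its membership in $[0,1]$: for $a<0$ we get $r<0$, for $0<a<1$ we get $r>1$, and for $a=1$ we get the double root $r=1$ --- in all of which $x=1$ is the unique solution in $[0,1]$ (the branch $a\le 1$); whereas for $a>1$ we get $r = 1/a \in (0,1)$, giving the two solutions $x_1 = 1/a$ and $x_2 = 1$.

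The case $p(1) > 0$ is the one I expect to be the crux, since both endpoint values are positive and the threshold $b=-2$ looks mysterious until one exploits the constant term. I would first prove \emph{necessity}: any root of $p$ in $[0,1]$ must in fact lie in $(0,1)$ (the endpoints are nonzero), and since $p$ is positive at both endpoints such roots occur either as two simple roots or as one double root in $(0,1)$. Writing these roots as $r_1, r_2 \in (0,1)$ and using $p(0) = a r_1 r_2 = 1$, one gets $b = -a(r_1+r_2) = -(1/r_1 + 1/r_2)$; since each reciprocal exceeds $1$, this forces $b < -2$. Contrapositively, $b \ge -2$ yields no solution.

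For \emph{sufficiency} when $b < -2$, the key leverage is that $p(1) = a+b+1 > 0$ together with $b<-2$ forces $a > \abs{b} - 1 > 1 > 0$; hence the parabola opens upward, its roots (when real) have positive product $1/a$ and positive sum $-b/a$, and since $p(1) = a(1-r_1)(1-r_2) > 0$ with $r_1 r_2 = 1/a < 1$ the roots cannot both exceed $1$, so both lie in $(0,1)$. It then only remains to read off multiplicity from the discriminant: $4a > b^2$ gives no real roots (no solution), $4a = b^2$ a double root in $(0,1)$, and $4a < b^2$ two roots in $(0,1)$. The main obstacle is thus concentrated in this last case --- organizing the sign-of-$a$ deduction so that the $a$-independent criterion $b<-2$ genuinely captures ``both roots inside'' --- and the reciprocal-root identity $b = -(1/r_1 + 1/r_2)$ is what makes the threshold $-2$ transparent.
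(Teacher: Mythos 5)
Your proposal is correct, and it reaches every row of the table, but it runs on a genuinely different engine than the paper's proof. The paper also branches first on the sign of $p(1)=a+b+1$ (and your treatment of $p(1)<0$ is identical to its IVT argument), but from there it works entirely with the location of the vertex $x_{\min}=-b/(2a)$ and the sign of $p(x_{\min})=1-b^2/(4a)$: for $p(1)=0$ it checks whether $x_{\min}$ lies left of, inside, or right of $(0,1)$ to decide if a second root appears, and for $p(1)>0$ it splits into $b\geq 0$, $-2a<b<0$, and $b\leq -2a$ according to where the vertex sits, afterwards verifying that the two ``vertex outside'' subcases both land in the table's row $b\geq -2$. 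You instead lean on Vieta's formulas: for $p(1)=0$ you read off the second root as $1/a$ directly, and for $p(1)>0$ you prove the clean biconditional that roots in $(0,1)$ exist precisely when $b<-2$ (and the discriminant permits), via the identity $b=-(1/r_1+1/r_2)<-2$ together with the deduction $a>\abs{b}-1>1$ for sufficiency. Your route explains \emph{why} the threshold is $-2$ rather than discovering it as a consistency check at the end, and it avoids the three-way vertex split; the paper's route is more mechanical but maps one-to-one onto the table's rows and handles the degenerate $a\leq 0$ subcases explicitly, which in your $p(1)>0$ branch are only implicitly covered (for $a=0$ the ``roots come in pairs'' step is vacuous since a linear function positive at both endpoints of $[0,1]$ has no zero there --- worth a sentence if you write this up).
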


\begin{proof}
  The proof goes through a set of assumptions on \( a \) and \( b \) that cover all \( a, b \in \reals \).
  Where the assumptions does not map one-to-one with the criteria in the table, an explanation is given.

  \paragraph{Assume that \( p(1) = a + b + 1 < 0 \).}
  Since \( p(0) = 1 > 0 \), by the intermediate value theorem, there must be at least one solution in \( (0, 1) \).
  Because \( p \) is quadratic, there can not be two roots in the interval and while the function has different signs on the boundary.
  \( x = 1 \) is not a root.
  Thus there is only one root \( x \in (0, 1) \).

  \paragraph{Assume that \( p(1) = a + b + 1 = 0 \) and \( a \leq 0 \).}
  \( x = 1 \) is a root.
  Also,
  \begin{equation}
    p(x) > 0,\, \forall x \in (0, 1).
  \end{equation}
  \( x = 1 \) is thus the only root.

  \paragraph{Assume that \( p(1) = a + b + 1 = 0 \) and \( 0 < a \leq 1 \).}
  \( x = 1 \) is a root and \( p \) has a minimum in
  \begin{equation}
    x_\text{min} = -\frac{b}{2a} = \frac{1 + a}{2a} \geq 1,
  \end{equation}
  so
  \begin{equation}
    p(x) > 0,\, \forall x \in (0, 1).
  \end{equation}
  Thus \( x = 1 \) is the only root.

  \paragraph{Assume that \( p(1) = a + b + 1 = 0 \) and \( 1 < a \).}
  \( x_2 = 1 \) is a root and \( p \) has a minimum in
  \begin{equation}
    x_\text{min} = -\frac{b}{2a} = \frac{1 + a}{2a}.
  \end{equation}
  Since
  \begin{equation}
    0 < x_\text{min} < 1
  \end{equation}
  and
  \begin{equation}
    p(x_\text{min}) = - \frac{(a - 1)^2}{4a} < 0,
  \end{equation}
  by the intermediate value theorem there is another root \( x_1 \in (0, x_\text{min}) \subset (0, 1) \).

  \paragraph{Assume that \( p(1) = a + b + 1 > 0 \) and \( a \leq 0 \).}
  Then one root \( x_1 < 0 \) and the other root \( x_2 > 1 \) (if \( a = 0 \), only one of these roots exist).
  Thus there are no roots in \( [0, 1] \).
  Note that the assumption lies in only one region in the table, since \( b > -a - 1 \geq -1 > -2 \).

  \paragraph{Assume that \( a + b + 1 > 0 \), \( a > 0 \) and \( b \geq 0 \).}
  The minimum
  \begin{equation}
    x_\text{min} = -\frac{b}{2a} \leq 0,
  \end{equation}
  and thus
  \begin{equation}
    p(x) > 0,\, \forall x \in [0, 1].
  \end{equation}

  \paragraph{Assume that \( p(1) = a + b + 1 > 0 \), \( a > 0 \) and \( -2a < b < 0 \).}
  The minimum
  \begin{equation}
    x_\text{min} = -\frac{b}{2a} \in (0, 1).
  \end{equation}
  Thus there will be no, one or two solutions in \( (0, 1) \) when
  \begin{align}
    p(x_\text{min}) &= 1 - \frac{b^2}{4a} > 0 \iff 4a > b^2 \\
    p(x_\text{min}) &= 1 - \frac{b^2}{4a} = 0 \iff 4a = b^2 \\
    p(x_\text{min}) &= 1 - \frac{b^2}{4a} < 0 \iff 4a < b^2
  \end{align}
  respectively.

  \paragraph{Assume that \( p(1) = a + b + 1 > 0 \), \( a > 0 \) and \( b \leq -2a \).}
  The minimum
  \begin{equation}
    x_\text{min} = -\frac{b}{2a} \geq 1,
  \end{equation}
  and thus
  \begin{equation}
    p(x) > 0,\, \forall x \in [0, 1].
  \end{equation}
  Note that \( 2b > -2a - 2 \geq b - 2 \) or equivalently \( b \geq -2 \).
\end{proof}

The coefficients in \cref{eq:fixedpointquadratic} are
\begin{subequations}
  \begin{align}
    a &= -2 \kappa \frac{\nu_1}{\nu_2} \\
    b &= \nu_1
  \end{align}
\end{subequations}
when \( \abs{\nu_1} \leq \abs{\nu_2} \) and
\begin{subequations}
  \begin{align}
    a &= -2 \kappa \frac{\nu_2}{\nu_1} \\
    b &= \nu_2
  \end{align}
\end{subequations}
when \( \abs{\nu_2} \leq \abs{\nu_1} \).
We apply the lemma separately for the identical and non-identical case.
In the case of identical neurons, we also have to consider the case \( \nu_1 = \nu_2 = \eta + 2\kappa - 1 = 0 \) in order to cover the entire parameter space.
The number of fixed points in different regions can be seen in \cref{tab:nonadaptive-sym-fixed-points,tab:nonadaptive-asym-fixed-points}.

\begin{table}[htp!]
  \begin{center}
    \begin{tabular}{lll}
      \toprule
      \multicolumn{2}{c}{Criteria} & Num. fixed points \\
      \midrule
      \( \eta < 0 \), & \( \eta + 2 \kappa - 1 \neq 0 \) & 4 fixed points \\
                      & \( \eta + 2 \kappa - 1 = 0 \) & Continuum where \( 2 \kappa x y = 1 \) \\
      \( \eta = 0 \), & \( \kappa < - \frac{1}{2} \) & 5 fixed points \\
                      & \( \kappa \geq - \frac{1}{2} \) & 1 fixed point \\
      \( \eta > 0 \), & \( \kappa < - \frac{\left( 1 + \sqrt{\eta} \right)^2}{2} \) & 8 fixed points \\
                      & \( \kappa = - \frac{\left( 1 + \sqrt{\eta} \right)^2}{2} \) & 4 fixed points \\
                      & \( \kappa > - \frac{\left( 1 + \sqrt{\eta} \right)^2}{2} \) & No fixed points \\
      \bottomrule
    \end{tabular}
  \end{center}
  \caption{Fixed points for identical excitabilities, \(\eta=\eta_1=\eta_2\).}
  \label{tab:nonadaptive-sym-fixed-points}
\end{table}

\begin{table*}[htp!]
  \begin{center}
    \begin{tabular}{llll}
      \toprule
      \multicolumn{3}{c}{Criteria} & Num. fixed points \\
      \midrule
      \( \nu_1 > 0 \), & \( \nu_1 \nu_2 - 2 \kappa \nu_1 + \nu_2 < 0 \) & & 4 fixed points \\
              & \( \nu_1 \nu_2 - 2 \kappa \nu_1 + \nu_2 = 0 \) & & 2 fixed points \\
      \( \nu_2 < 0 \), & \( \nu_1 \nu_2 - 2 \kappa \nu_2 + \nu_1 > 0 \) & & 4 fixed points \\
              & \( \nu_1 \nu_2 - 2 \kappa \nu_2 + \nu_1 = 0 \), & \( \nu_1 + 2 \kappa \nu_2 \leq 0 \) & 2 fixed points \\
              & & \( \nu_1 + 2 \kappa \nu_2 > 0 \) & 6 fixed points \\
      \( \nu_2 < -2 \), & \( \nu_1 \nu_2 - 2 \kappa \nu_2 + \nu_1 < 0 \), & \( \nu_1 \nu_2 + 8 \kappa < 0 \) & 8 fixed points \\
              & & \( \nu_1 \nu_2 + 8 \kappa = 0 \) & 4 fixed points \\
      otherwise      & & & No fixed points \\
      \bottomrule
    \end{tabular}
  \end{center}
  \caption{Fixed points for non-identical excitabilities, \( \eta_1 \neq \eta_2 \).}
  \label{tab:nonadaptive-asym-fixed-points}
\end{table*}

\section{Details of numerics}
Below follows some notes on the numerics used in the paper.

\subsection{Numerical scheme for finding attractors} \label{app:attractor-scheme}
For a parameter value, initial conditions are sampled throughout the phase space.
For each such initial condition, the model is integrated using an explicit 5/4 Runge--Kutta method for some time \( \transienttime \) where no states except the final one is stored.
Then, the integration is restarted with this state as the initial value and integrated using an explicit 5/4 Runge--Kutta method for some time \( \stabletime \), and the state in every time step is recorded.
Based on the assumption that all transient behavior occurred during the first \( \transienttime \) part of the integration, the initial condition is categorized as being in the basin of attraction for a type of attractor based on the following criteria:
\begin{enumerate}
  \item If the state has changed more than \( 2\pi \) in any \( \theta_k \), the orbit is labeled a rotation in those \( \theta_k \).
  \item If the state has changed less than some tolerance \( \fixedpointtolerance \) in the 2-norm, the orbit is labeled a fixed point.
  \item Otherwise, the orbit is labeled a libration.
\end{enumerate}
The parameter value is then marked as having at least one attractor of that given type.
After the process has been repeated for all initial conditions, the found types of attractors for the parameter value can then be recorded.

The method can miss types of attractors if the sampling is not dense enough; at least one sample per basin of attraction is required to get correct results.
The method can also falsely identify some types of orbits if the time \( \transienttime \) is shorter than the duration of transient behavior.
False identification can also happen if the time \( \stabletime \) is shorter than the period of a rotation.

Also note that this scheme ignores how many distinct attractors of a given type exist, as well as differentiating between periodic and quasi-periodic rotations.

%
%

\subsection{Modification of co-evolutionary system for MatCont} \label{app:coev-in-matcont}
Bifurcation curves of the co-evolutionary system were investigated using the numerical continuation software MatCont \cite{dhooge2008matcont}.
MatCont requires that the dynamical system has a real vector space \( \reals^m \) as the phase space.
The embedding \( \theta_k \mapsto \left( \cos \theta_k, \sin \theta_k \right) = \left( x_k, y_k \right) \) of the periodic phase variables is used to achieve this.
The vector field is extended in the radial direction by
\begin{equation}
  \dt{r_k} = 1 - r_k
\end{equation}
where \( x_k = r_k \cos \theta_k \) and \( y_k = r_k \sin \theta_k \) such that the unit circle is attractive.
Thus the node dynamics used in MatCont are
\begin{subequations}
  \begin{align}
    \begin{split}
      \dt{x_k} &= \diff*{\left( r_k \cos \theta_k \right)}{t} \\
      &= \dt{r_k} \cos \theta_k - r_k \sin(\theta_k) \dt{\theta_k} \\
      &= (1 - r_k) \cos \theta_k - y_k \dt{\theta_k}
    \end{split}\\
    \begin{split}
      \dt{y_k} &= \diff*{\left( r_k \sin \theta_k \right)}{t} \\
      &= \dt{r_k} \sin \theta_k + r_k \cos(\theta_k) \dt{\theta_k} \\
      &= (1 - r_k) \sin \theta_k + x_k \dt{\theta_k}
    \end{split}
  \end{align}
\end{subequations}
where
\begin{equation}
  r_k = \sqrt{x_k^2 + y_k^2}
\end{equation}
and
\begin{equation}
  \cos \theta_k = \frac{x_k}{r_k}.
\end{equation}

\section*{References}
\nocite{*}
%

\end{document}